\documentclass[conf]{ieeeconf}%

\IEEEoverridecommandlockouts                              %

\usepackage{xr}

\newcommand{\omitted}[1]{}%

\title{%
Optimistic Distributed Submodular Coordination with Applications to Multi-Target Tracking with Multiple Robots
}
\author{Pranjal Sharma, Atharva Navsalkar, Zirui Xu, Vasileios Tzoumas$^\dagger$ \vspace{-1.5em}
	\thanks{%
 $^\dagger$Department of Aerospace Engineering, University of Michigan, Ann Arbor, MI 48109 USA;  {\tt\footnotesize \{spranjal, amnav, ziruixu, vtzoumas\}@umich.edu}} 
}

\usepackage{cite}

\usepackage{comment}
\usepackage{siunitx}
\usepackage{relsize}
\usepackage{ifthen}
\usepackage[colorinlistoftodos]{todonotes}

\usepackage[vlined,ruled,linesnumbered]{algorithm2e}
\usepackage{graphics} %
\usepackage{rotating}
\usepackage{color}
\usepackage{enumerate}
\usepackage[T1]{fontenc}
\usepackage{psfrag}
\usepackage{epsfig} %
\usepackage{booktabs}
\usepackage{graphicx,url}
\usepackage{multirow}
\usepackage{array}
\usepackage{latexsym}
\usepackage{amsfonts}
\usepackage{amsmath}
\usepackage{amssymb}
\usepackage{xstring}
\usepackage{algorithmic}
\usepackage{multirow}
\usepackage{xcolor}
\usepackage{prettyref}
\usepackage{flexisym}
\usepackage{bigdelim}
\usepackage{breqn} %
\usepackage{listings}
\let\labelindent\relax
\usepackage{xspace}
\usepackage{bm}
\graphicspath{{./figures/}}
\usepackage{tikz}
\usetikzlibrary{matrix,calc}
\usepackage{lipsum}
\usepackage{mdwlist}

\let\itemize\undefined
\makecompactlist{itemize}{stditemize}
\usepackage{enumitem}
\usepackage{caption}

\usepackage{amsthm}
\usepackage{mathtools}

\makeatletter
\let\NAT@parse\undefined
\makeatother
\usepackage{hyperref}
\usepackage{cleveref}
\allowdisplaybreaks

\hypersetup{%
  colorlinks=true,
  linkcolor=blue,
  filecolor=magenta,      
  urlcolor=black,
  citecolor=red,
  linkbordercolor={0 0 1}
}

\newtheorem{theorem}{Theorem}
\newtheorem{problem}{Problem}

\newtheorem{lemma}{Lemma}

\newtheorem{definition}{Definition}

\newcommand{\bdmath}{\begin{dmath}}
\newcommand{\edmath}{\end{dmath}}
\newcommand{\beq}{\begin{equation}}
\newcommand{\eeq}{\end{equation}}
\newcommand{\bdm}{\begin{displaymath}}
\newcommand{\edm}{\end{displaymath}}
\newcommand{\bea}{\begin{eqnarray}}
\newcommand{\eea}{\end{eqnarray}}
\newcommand{\beal}{\beq \begin{array}{lll}}
\newcommand{\eeal}{\end{array} \eeq}
\newcommand{\beas}{\begin{eqnarray*}}
\newcommand{\eeas}{\end{eqnarray*}}
\newcommand{\ba}{\begin{array}}
\newcommand{\ea}{\end{array}}
\newcommand{\bit}{\begin{itemize}}
\newcommand{\eit}{\end{itemize}}
\newcommand{\ben}{\begin{enumerate}}
\newcommand{\een}{\end{enumerate}}

\newcommand{\calA}{{\cal A}}
\newcommand{\calB}{{\cal B}}
\newcommand{\calC}{{\cal C}}

\newcommand{\calE}{{\cal E}}

\newcommand{\calG}{{\cal G}}

\newcommand{\calN}{{\cal N}}

\newcommand{\calV}{{\cal V}}

\definecolor{myblue}{RGB}{65 105 225}

\newcommand{\hide}[1]{}

\newcommand{\hiddenText}{{\color{gray} hidden text.}}
\newcommand{\hideWithText}[1]{\hiddenText}

\newcommand{\opt}{^{\star}}

\DeclareRobustCommand{\scenario}[1]{%
  \ifmmode
    \mathsf{#1}%
  \else
    \texorpdfstring{{\fontsize{8.9}{9}\selectfont\sffamily #1}\xspace}{#1}%
  \fi
}

\newcommand{\ie}{\emph{i.e.},\xspace}
\newcommand{\eg}{\emph{e.g.},\xspace}
\newcommand{\myin}{\, \in \,}

\newcommand{\myParagraph}[1]{{\bf #1.}\xspace}

\renewcommand{\opt}{\scenario{OPT}}

\newcommand{\curv}{\kappa}
\newcommand{\ourcurv}{\scenario{coin}}

\newcommand{\alg}{\scenario{O-DOG}}
\newcommand{\dog}{\scenario{DOG}}
\newcommand{\oosg}{\scenario{O-OSG}}

\newcommand{\elem}{v}

\newcommand{\fsf}[1]{\scenario{FSF$^{\star}$}}
\newcommand{\ofsf}[1]{\scenario{O-FSF$^{\star}$}}
\newcommand{\ofsfI}[1]{\scenario{O-FSF$^{\star}|_{#1}$}}

\newcommand{\onalg}{\scenario{OSG}}
\newcommand{\solopt}{\calA^{\opt}}

\PassOptionsToPackage{end}{algorithmic}

\makeatletter
\def\@IEEEfigurecaptionsepspace{\vspace{-5.00pt}}

\def\section{\@startsection{section}{1}{\z@}%
{1.0ex plus 0.3ex minus 0.15ex}%
{0.18ex plus 0.08ex minus 0.08ex}%
{\normalfont\normalsize\centering\scshape}}
\def\subsection{\@startsection{subsection}{2}{\z@}%
{0.7ex plus 0.2ex minus 0.1ex}%
{0.18ex plus 0.08ex minus 0.08ex}%
{\normalfont\normalsize\itshape}}

\def\thm@space@setup{\thm@preskip=1.5pt plus 0.6pt minus 0.4pt%
\thm@postskip=1.5pt plus 0.6pt minus 0.4pt}

\patchcmd{\thebibliography}{\itemsep 0pt plus .5pt\relax}%
{\itemsep -0.90pt plus .15pt\relax}{}%
{\PackageError{layout-spacing}{Bibliography spacing patch failed}{}}
\patchcmd{\thebibliography}%
{\vskip 0.3\baselineskip plus 0.1\baselineskip minus 0.1\baselineskip}%
{\vskip 0.5pt}{}%
{\PackageError{layout-spacing}{Bibliography heading spacing patch failed}{}}
\makeatother

\AtBeginEnvironment{proof}{%
\pretocmd{\trivlist}{\topsep1.5pt plus 0.4pt minus 0.4pt\relax}{}%
{\PackageError{layout-spacing}{Proof spacing hook failed}{}}}

\begin{document}
	
\maketitle
\thispagestyle{empty}
\pagestyle{empty}

\begin{abstract}
We provide a distributed online algorithm for multi-agent submodular coordination in bandit environments, \ie environments whose evolution is unknown a priori and may even adapt to the agents' actions. We are motivated by multi-robot tasks such as target tracking, where the optimal actions change over time and the agents must track them rather than a single fixed optimum. To address such problems, we provide the Optimistic Distributed Online Greedy (\alg) algorithm, that leverages predictions about the environment's behavior and improves regret guarantees. The agents select actions simultaneously over arbitrary, possibly time-varying, network topologies, and each agent acts \emph{optimistically}, \ie leveraging predictions of its upcoming rewards, \eg from forecasts of the targets' motion and of its neighbors' actions. We bound \alg's tracking regret in terms of the agents' prediction error and the bound never exceeds its prediction-free counterpart. Particularly, it is sublinear even when the optimal action changes at every time step, provided the predictions are accurate enough.   
We further quantify the suboptimality due to decentralization as a function of the network structure, and show that accurate predictions accelerate convergence. We validate \alg in simulated multi-robot target tracking.
\end{abstract}

\section{Introduction}\label{sec:Intro}

Teams of robots are increasingly deployed to gather information in dynamic environments: tracking multiple moving targets~\cite{tokekar2014multi,corah2021scalable}, exploring and mapping unknown spaces~\cite{atanasov2015decentralized,corah2019distributed}, and monitoring environmental phenomena~\cite{krause2008near,singh2009efficient}. At every time step $t$, the robots seek to maximize a shared utility via optimization problems of the form
\begin{equation}
\max_{a_{i,t}\in\calV_i,\,\forall i\in\calN}\; f_t(\{a_{i,t}\}_{i\in\calN}),\quad t=1,2,\dots, \label{eq:intro-prob}
\end{equation}
where $\calN$ is the set of robots, $\calV_i$ the actions available to robot~$i$, $a_{i,t}$ the action it selects at time~$t$, and $f_t: \Pi_{i \myin \mathcal{N}}\mathcal{V}_i \mapsto \mathbb{R}$ is the objective function that captures the task utility. Although~\cref{eq:intro-prob} is generally NP-hard~\cite{Feige:1998:TLN:285055.285059}, near-optimal polynomial-time approximation algorithms have been proposed when $f_t$ is \emph{submodular}, a diminishing returns property. For example, the Sequential Greedy (\scenario{SG}) algorithm~\cite{fisher1978analysis}, in which robots choose their actions sequentially to maximize theire marginal contribution, achieves a $1/2$ approximation bound when $f_t$ is submodular. All aforementioned complex tasks can be modeled as submodular coordination problems. Consequently, (\scenario{SG}) and its variants have been commonly adopted in controls, machine learning, and robotics literature~\cite{krause2008near,singh2009efficient,tokekar2014multi,atanasov2015decentralized,gharesifard2017distributed,grimsman2019impact,corah2018distributed,schlotfeldt2021resilient,liu2021distributed,robey2021optimal,rezazadeh2023distributed,konda2022execution,krause2012submodular,xu2025communication}.

Three challenges stand between this approach and deployment. First, $f_t$ is unknown when the robots act: the utility of an action depends on where the targets will be, so the robots must therefore commit to actions before $f_t$ is revealed and learn from the utilities of their actions in hindsight. Second, communication is a bottleneck: \scenario{SG} and its online counterpart~\cite{xu2023online} require each robot to wait for the decisions of all preceding robots in the sequence before it can select an action. Therefore, every action selection round incurs a chain of sequential communication rounds whose length grows quadratically with the team size~\cite{xu2025communication}. Consensus-based simultaneous methods, in turn, require connected, undirected networks and operate on a continuous relaxation of the objective~\cite{zhang2025near}. Third, the environment can be adversarial as targets may deliberately evade the robots, reacting to their motion~\cite{chung2011search}, so that the optimal actions may change at nearly every time step. Additionally, the robots are seldom without foresight: motion models~\cite{kalman1960new} and learned predictors~\cite{rudenko2020human,navsalkar2026} forecast the targets' future positions.  But their accuracy cannot be guaranteed when the targets behave adversarially~\cite{xu2024leveraging}. We therefore seek a bound on the \emph{tracking regret}~\cite{herbster1998tracking}, which compares the robots' cumulative utility against the near-optimal utility that can be attained by the best sequence of time-varying actions chosen with knowledge of the future. Moreover, we seek a bound that improves as the forecasts become more accurate.

To our knowledge, existing methods address at most two of these challenges. Distributed submodular coordination with limited information~\cite{gharesifard2017distributed,grimsman2019impact,konda2022execution,robey2021optimal,xu2025communication} assumes $f_t$ is known, and model-based methods plan on forecasts but presume them correct~\cite{atanasov2015decentralized,schlotfeldt2021resilient}. Online methods handle unknown $f_t$, but some compete only with the best \emph{fixed} action~\cite{streeter2008online,chen2018online}, including \scenario{DOG}~\cite{xu2026distributed}.
\scenario{OSG}~\cite{xu2023online} and its bandit variant~\cite{xu2023bandit} bound the tracking regret but decide sequentially, while \scenario{MA-OSMA}~\cite{zhang2025near} decides simultaneously via consensus on a continuous relaxation but requires undirected connected networks and real-valued messages. None of these use forecasts, and their tracking regret bounds grow linearly in time when the optimal actions change at a constant frequency, \eg when targets actively evade pursuers. Optimistic online learning makes the regret scale with the error of a forecast of the next loss~\cite{chiang2012,rakhlin2013,zhao2024}, but its fast rates in games~\cite{zzlz2022} assume that payoffs are bilinear in the players' mixed strategies. In contrast, we consider problems where each robot's utility depends on its teammates' \emph{realized} actions. In submodular maximization, optimism so far requires a centralized decision maker and has been limited to weighted threshold potentials, a subclass of submodular functions~\cite{sisalem2024online}, while \scenario{MetaBSG}~\cite{xu2024leveraging} accepts or rejects suggested actions as a whole, deciding sequentially.

We propose the \emph{Optimistic Distributed Online Greedy} (\alg)
algorithm for eq.~\eqref{eq:intro-prob} (\Cref{sec:algorithm}) as an extension of \dog~\cite{xu2026distributed}, which
addresses all three challenges.
First, \alg's tracking regret is bounded in terms of the robots'
cumulative forecast error rather than the time horizon (\Cref{sec:bound}). We define the forecast error or accuracy as the cumulative difference between the predicted and true rewards over the time horizon. We prove that accurate forecasts yield a better bound and faster convergence to near optimal actions as compared to the forecast free alternatives. Second, \alg applies to arbitrary, possibly directed and time-varying, communication networks, and its approximation guarantee improves as the network becomes more connected. When the network is strongly connected, it matches the bound with the $1/(1+\kappa_f)$ factor of \scenario{SG}, whereas it degrades down to $1-\kappa_f$ when the robots do not communicate at all (\Cref{thm:odog}). Third, as in \scenario{DOG}, each robot evaluates its candidate actions against the actions of the robots it can receive information from, so the robots decide and communicate simultaneously rather than in a sequential chain.



We evaluate \alg in simulated target tracking under 20 and 50 robot scenarios, where a quarter of the targets are active evaders (\Cref{sec:sim}). With about half of \onalg's messages per round, \alg tracks the targets 27\% (20 robots) and 39\% (50 robots) closer than \oosg, i.e., \onalg equipped with \alg's optimism framework. Moreover, \alg tracks 70\% closer than the published \onalg. This is achieved by replacing \onalg's $N-1$ sequential communication hops per round with a single simultaneous exchange, which lets \alg re-select actions up to 11 times faster.

\section{Distributed Online Submodular Coordination with Bounded Tracking Regret}\label{sec:problem}

We present the problem formulation.  To this end, we use
 the following notation:
\begin{itemize}[leftmargin=*]
    \item $\calV_\calN \triangleq \prod_{i\myin \calN} \,\calV_i$ 
    is the cross product of sets $\{\calV_i\}_{i\myin \calN}$.
    \item $[T]\triangleq\{1,\dots,T\}$ for any positive integer $T$;
    \item $f(\,a\,|\,\calA\,)\triangleq f(\,\calA \cup \{a\}\,)-f(\,\calA\,)$ is the marginal gain of set function $f:2^\calV\mapsto \mathbb{R}$ for adding $a \in \calV$ to $\calA \subseteq\calV$.
    \item $|\calA|$ is the cardinality of a discrete set $\calA$. 
    \item $\calE$ is the set of directed communication edges. 
\end{itemize}
We also use the following framework about  the agents' communication network and their global objective $f$.

\myParagraph{Communication neighborhood}  
When a communication channel exists from agent $j$ to $i$, \ie $(j\rightarrow i) \in \calE$, then $i$ can receive, store, and process information from $j$. The set of all agents that $i$ can receive information from, possibly via relay (\ie multi-hop communication), is denoted by $\calN_{i,t}$. We refer to $\calN_{i,t}$ as agent $i$'s \textit{neighborhood} at time step $t$.

\myParagraph{Communication network} 
The distributed communication network $\calG=\{\calN, \calE\}$ \textit{can be directed and even disconnected}. When the network is fully connected (all agents receive information from all others), we call it \textit{fully centralized}. In contrast, when the network is fully disconnected (all agents are isolated, receiving information from no other agent), we call it \textit{fully decentralized}.

\myParagraph{Optimism}
Before acting at round $t$, each agent $i$ has access to a prediction $\hat{r}_{i,t}$ of the reward $r_{i,t}$ it is about to receive for any action $a_{i,t} \in \mathcal{V}_i$. The predictions can be formed using any information available up to round $t-1$. The predictions can be supplied by any arbitrary heuristic and our guarantees degrade gracefully with the cumulative prediction error.

\begin{definition}[Normalized and Non-Decreasing Submodular Set Function{~\cite{fisher1978analysis}}]\label{def:submodular}
A set function $f:2^\calV\mapsto \mathbb{R}$ is \emph{normalized and non-decreasing submodular} if and only if 
\begin{itemize}[leftmargin=*]
\item (Normalization) $f(\,\emptyset\,)=0$;
\item (Monotonicity) $f(\,\calA\,)\leq f(\,\calB\,)$, $\forall\,\calA\subseteq \calB\subseteq \calV$;
\item (Submodularity) $f(\,s\,|\,\calA\,)\geq f(\,s\,|\,{\mathcal{B}}\,)$, $\forall\,\calA\subseteq {\mathcal{B}}\subseteq\calV$, $s\in \calV$.
\end{itemize}
\end{definition}

Intuitively, let $f(\,\calA\,)$ measure how closely a set $\calA$ of robot actions tracks the targets, \eg by rewarding robots for staying near them. Adding a robot can only bring more targets within reach, so the tracking quality never decreases; this is the non-decreasing property. Moreover, given a fixed number of targets, a robot contributes \emph{less} when its team is larger, since those targets gain little from one more pursuer; this is the submodularity property.

\begin{definition}[2nd-order Submodularity{~\cite{crama1989characterization,foldes2005submodularity}}]\label{def:conditioning}
$f:2^\calV\mapsto \mathbb{R}$ is \emph{2nd-order submodular} if and only if 
\begin{equation}\label{eq:conditioning}
    f(s\,|\,\calC) - f(s\,|\,\calA\cup\calC) \geq f(s\,|\,\calB\cup\calC) - f(s\,|\,\calA\cup\calB\cup\calC),
\end{equation}
for any \emph{disjoint} $\calA, \calB, \calC\subseteq \calV$ and $s\in\calV$.
\end{definition}

Intuitively, if $f(\,\calA\,)$ captures the number of targets tracked by a set $\calA$ of sensors, then \emph{marginal gain of the marginal gains} drops when more sensors are already deployed.

\begin{problem}[Distributed Online Submodular Coordination with Predictions]
\label{pr:dist_submod_max}
At each time  $t\myin [T]$, each agent $i$ selects an action $a_{i,t}$ online, using a prediction $\hat r_{i,t}$ and only  information from and about its neighbors $\calN_{i,t}$, to solve
\begin{equation}
\max_{a_{i,t}\in\calV_i,\;\forall i\in\calN,\;t\in[T]}\; \sum_{t=1}^{T} f_t\big(\{a_{i,t}\}_{i\in\calN}\big),
  \label{eq:prob}
\end{equation}
where each $f_t:{\calV_\calN}\mapsto\mathbb{R}$ is normalized, non-decreasing, and submodular. $f_t$ may be selected adversarially in response to the agents' past actions, and becomes known only after the agents have executed their actions at time step $t$. No assumption is made on the accuracy of the predictions $\hat r_{i,t}$.
\end{problem}

Problem~\ref{pr:dist_submod_max} is a generalization to the problem in~\cite{xu2023online} by considering (i) the availability of reward predictions, and (ii) an arbitrary rather than a connected network. Additionally, Problem~\ref{pr:dist_submod_max} assumes that the agents know the full objective function $f_t$ once they have executed their actions for time step $t$. This assumption is known as the \emph{full information} setting in the literature of online learning~\cite{matsuoka2021tracking}. For target tracking, this means that once the targets have moved, each robot can tell how well any of its actions would have tracked.

The predictions available to each agent might be inaccurate, \eg they might point a robot to the direction opposite to its target. Thus, the algorithm must be resilient against such predictions and match the performance of the corresponding prediction-free algorithm~\cite{xu2023online}. We achieve this by implementing the two-layer structure of the Fixed Share Forecaster$^*$ (\fsf)) algorithm~\cite{matsuoka2021tracking} along with ensuring that any predictions only affect the current action being played and not the future actions (\Cref{subsec:fsf}).


\section{Optimistic Distributed Online Greedy Algorithm (\alg) for Tracking Regret} \label{sec:algorithm}

\setlength{\textfloatsep}{3mm}
\begin{algorithm}[t]
\caption{Optimistic Fixed Share Forecaster$^{\star}$ (\ofsf)).}
\label{alg:ofsf}
\begin{algorithmic}[1]
\REQUIRE Time steps $T$; action set $\mathcal{V}$; and predictions
  $\hat r_t \in [0,1]^{|\mathcal{V}|}$ of the reward vector, each
  available before the step $t$.
\ENSURE Probability distribution
  $p_t \in \{[0,1]^{|\mathcal{V}|} \! : \! \|p_t\|_1 = 1\}$ over the actions
  in $\mathcal{V}$ at each $t = 1,\dots,T$.
 
\STATE $J \leftarrow \lceil \log_2 (2T) \rceil + 1$,\
       $\alpha \leftarrow 3 + \log_e (|\mathcal{V}|\,T)$,\
       $\beta \leftarrow 1/T$;
 
\STATE $\gamma^{(j)} \leftarrow \sqrt{\alpha / T}\ 2^{\,j-1}$
       for all $j = 1,\dots,J$;
 
\STATE $\ell \leftarrow 0$,\
       $\varepsilon_{\ell} \leftarrow \sqrt{\log_e (J)}$,\
       $B_0 \leftarrow 0$;
 
\STATE Initialize $z_1 = [z_{1,1},\dots,z_{J,1}]^{\top}$ with
       $z_{j,1} = 1$, for all $j = 1,\dots,J$;
 
\STATE Initialize
       $w^{(j)}_1 = \big[w^{(j)}_{1,1},\dots,w^{(j)}_{|\mathcal{V}|,1}\big]^{\top}$
       with $w^{(j)}_{i,1} = 1$, \quad for all $j = 1,\dots,J$ and
       $i = 1,\dots,|\mathcal{V}|$;
 
\FOR{each time step $t = 1,\dots,T$}
 
  \STATE \textbf{receive} the prediction $\hat r_t$ of the reward
         vector at the step $t$;
 
  \STATE $\tilde w^{(j)}_t \leftarrow w^{(j)}_t \odot
         \exp\big(\gamma^{(j)} \hat r_t\big)$,\
         $p^{(j)}_t \leftarrow \tilde w^{(j)}_t /
         \big\|\tilde w^{(j)}_t\big\|_1$,
         for all $j = 1,\dots,J$;
 
  \STATE $\mu_{j,t} \leftarrow \hat r_t^{\top} p^{(j)}_t$
         for all $j$,\
         $\tilde z_t \leftarrow z_t \odot
         \exp\big(\varepsilon_{\ell}\, \mu_t\big)$,\
         $q_t \leftarrow \tilde z_t / \|\tilde z_t\|_1$;
 
  \STATE $p_t \leftarrow \sum_{j=1}^{J} q_{j,t}\, p^{(j)}_t$;
 
  \STATE \textbf{observe} the rewards $\{r_{i,t}\}_{i \in \mathcal{V}}$
         that the agent receives by selecting any action
         $i \in \mathcal{V}$ at the step $t$;
 
  \STATE $r_t \leftarrow
         \big[r_{1,t},\dots,r_{|\mathcal{V}|,t}\big]^{\top}$;
 
  \FOR{$j = 1,\dots,J$}
    \STATE $v^{(j)}_{i,t} \leftarrow w^{(j)}_{i,t}
           \exp\big(\gamma^{(j)} r_{i,t}\big)$,
           $i = 1,\dots,|\mathcal{V}|$;
    \STATE $W^{(j)}_t \leftarrow v^{(j)}_{1,t} + \cdots +
           v^{(j)}_{|\mathcal{V}|,t}$;
    \STATE $w^{(j)}_{i,t+1} \leftarrow
           \beta \frac{W^{(j)}_t}{|\mathcal{V}|} +
           (1-\beta)\, v^{(j)}_{i,t}$,
           $i = 1,\dots,|\mathcal{V}|$;
    \STATE $z_{j,t+1} \leftarrow z_{j,t}
           \exp\big(\varepsilon_{\ell}\, r_t^{\top} p^{(j)}_t\big)$;
  \ENDFOR
 
  \STATE $B_t \leftarrow B_{t-1} +
         \big\| r_t^{\top} P_t - \mu_t \big\|_{\infty}^{2}$,\ where
         $r_t^{\top} P_t \triangleq
         \big[r_t^{\top} p^{(1)}_t,\dots,r_t^{\top} p^{(J)}_t\big]^{\top}$;
 
  \IF{$B_t > 4^{\ell}$}
    \STATE $\ell \leftarrow \ell + 1$,\
           $\varepsilon_{\ell} \leftarrow
           \sqrt{\log_e (J) / 4^{\ell}}$,\
           $z_{t+1} \leftarrow [1,\dots,1]^{\top}$;
  \ENDIF
 
\ENDFOR
\end{algorithmic}
\end{algorithm}

We present the Optimistic Distributed Online Greedy algorithm (\alg) for Problem~\ref{pr:dist_submod_max}. \alg leverages, as a subroutine, an algorithm for the problem of \textit{tracking the best expert}~\cite{matsuoka2021tracking}. Particularly, we provide the optimistic extension of the Fixed Share Forecaster$^{\star}$ (\fsf~\cite{matsuoka2021tracking}) subroutine used by \alg. Thus, before presenting \alg in \Cref{subsec:o_dog}, we first discuss the problem of \emph{tracking the best expert} in \Cref{subsec:fsf} and present the proposed subroutine (\ofsf)).

\subsection{The Problem of Tracking the Best Expert}\label{subsec:fsf}

\emph{Tracking the best expert} involves an agent selecting actions to maximize the total reward (utility) across a given number of time steps. However, the challenge is that the reward associated with each action is time-varying and unknown to the agent before the action has been executed. Therefore, to solve the problem, the agent needs to guess the ``best expert'' actions, \emph{i.e.}, the actions achieving the highest reward at each time step.

To state the problem, we use the following notation:
\begin{itemize}[leftmargin=*]
    \item $\mathcal{V}$ denotes the set of actions available to the agent;
    \item $r_{i,t}$ denotes the reward that the agent receives by selecting action $i \myin \mathcal{V}$ at the time step $t$
    \item $r_t \triangleq \left[r_{1,t}, \dots, r_{|\mathcal{V}|,t}\right]^\top$ is the vector of all rewards at $t$;
    \item $i_t^\star \in \arg\max_{i \in \mathcal{V}} r_{i,t}$; \emph{i.e.}, $i_t^\star$ is the index of the ``best expert'' action at time $t$, that is, of the action that achieves the highest reward at time $t$;
    \item $\mathbf{1}(\cdot)$ is the indicator function, \emph{i.e.}, $\mathbf{1}(x) = 1$ if the event $x$ is true, otherwise $\mathbf{1}(x) = 0$.
    \item $P(T) \triangleq \sum_{t=1}^{T-1} \mathbf{1}(i_t^\star \neq i_{t+1}^\star)$; \emph{i.e.}, $P(T)$ counts how many times the best action changes over $T$ time steps.
\end{itemize}

\begin{problem}[Tracking the Best Expert~\cite{cesa2006prediction}]\label{pr:tracking}
Assume a time horizon $H$ of operation discretized to $T$ time steps. The agent selects an action $a_t$ \emph{online} at each time step $t = 1, \dots, T$ to solve the optimization problem
\begin{equation}
    \max_{a_t \in \mathcal{V},\, t = 1, \dots, T} \sum_{t=1}^{T} r_{a_t, t}, \label{eq:tracking_best_expert}
\end{equation}
where $\{r_{i,t}\}_{i \in \mathcal{V}}$ can take any value and become known to the agent \emph{only once the agent has executed action $a_t$}.
\end{problem}

Algorithm~\ref{alg:ofsf}
guarantees a suboptimality bound when solving Problem~\ref{pr:tracking} despite the ``best expert'' actions $i^{\star}_1,\dots,i^{\star}_T$ being unknown a priori to the agent, \emph{i.e.}, despite the challenge that the rewards $\{r_{i,t}\}_{i \in \mathcal{V}}$ become known only \emph{once} the agent has executed action $a_t$. To this end, the algorithm provides the agent with a probability distribution $p_t$ over the action set $\mathcal{V}$ at each time step $t$, from which the agent draws an action $a_t$. Algorithm~\ref{alg:ofsf} additionally makes use of a \emph{prediction}
\begin{equation}
  \hat r_t \triangleq
  \big[\hat r_{1,t},\dots,\hat r_{|\mathcal{V}|,t}\big]^{\top}
  \in [0,1]^{|\mathcal{V}|}
  \label{eq:pred}
\end{equation}
of the reward vector $r_t$ that the agent computes before the time step $t$ from the information it has already observed. No assumption is made on the accuracy of $\hat r_t$: it may be produced by any heuristic, and it enters the guarantee only through the \emph{prediction error budget}
\begin{equation}
  D(T) \triangleq \sum_{t=1}^{T}
  \big\|r_t - \hat r_t\big\|_{\infty}^{2},
  \label{eq:DT}
\end{equation}
which accumulates how far the predictions fall from realized rewards over $T$ time steps. Since $r_t$ and $\hat r_t$ both lie in $[0,1]^{|\mathcal{V}|}$, we have $D(T) \leq T$. Thus, the prediction error budget is bounded by $T$ regardless the prediction accuracy.

\medskip
\noindent\textbf{Intuition Behind Algorithm~\ref{alg:ofsf}.} \ofsf  rruns $J$ base learners in parallel. Base learner $j$ keeps a weight $w^{(j)}_{i,t}$ for each action $i$, which grows the more reward action $i$ has received relative to the others (lines~14--16). The learning rate $\gamma^{(j)}$ sets how strongly each new reward moves these weights, so larger rates adapt faster to a changing environment. The prediction enters only in the distribution the agent plays (lines~8--10): at line~8, the weights are tilted by $\hat r_t$, as though $\hat r_t$ had already been observed. Since line~14 updates the weights $w^{(j)}_{i,t}$ unaffected by $\hat r_t$, a poor prediction only affects the round at which it was issued, which is why $D(T)$ sums per-round errors.

Large rates suit environments whose best action changes often (large $P(T)$) and predictably (small $D(T)$), which make it safe to react quickly. Small rates suit the opposite, making the algorithm resilient against harmful predictions.

\smallskip
\noindent\textbf{Adapting to Unknown $P(T)$ and $D(T)$.} A meta learner combines the base learners into $p_t$, itself tilted by $\hat r_t$ (lines~9--10). This is the distribution that is played by algorithm every round. Since $P(T)$ is never observed, the meta hedges across all rates; since $D(T)$ is, the meta's own rate shrinks as the observed prediction error grows (lines~18--20) via a doubling trick~\cite{cesa2006prediction}. This two-layer structure of $J$ base learners combined with a meta learner allows \ofsf{} to operate in adversarial environments and converge towards a near-optimal sequence of actions.

\setlength{\textfloatsep}{3mm}
\begin{algorithm}[t]
\caption{Optimistic Distributed Online Greedy (\alg) for agent $i$.}
\label{alg:odog}
\begin{algorithmic}[1]
\REQUIRE Time steps $T$; agent $i$'s action set $\mathcal{V}_i$; and
  agent $i$'s neighborhood $\mathcal{N}_i$.
\ENSURE Agent $i$'s action $a^{\alg}_{i,t}$ at each $t = 1,\dots,T$.
 
\STATE Initialize an \ofsf with the value of the parameters $T$ and
  $\mathcal{V}_i$, denoted by \ofsfI{i};
 
\FOR{each time step $t = 1,\dots,T$}
 
  \STATE \textbf{form} the prediction
    $\hat r^{(i)}_t \in [0,1]^{|\mathcal{V}_i|}$ of the reward vector at
    step $t$, using information observed up to step $t-1$;
 
  \STATE \textbf{input} $\hat r^{(i)}_t$ to \ofsfI{i}
    (per line~7 of Algorithm~\ref{alg:ofsf});
 
  \STATE \textbf{get} the output $p^{(i)}_t$ from \ofsfI{i};
 
  \STATE \textbf{draw} an action $a^{\alg}_{i,t}$ from the distribution
    $p^{(i)}_t$, and \textbf{execute} it;
 
  \STATE \textbf{send} $a^{\alg}_{i,t}$ to the out-neighbors of $i$, and
    \textbf{relay} the actions received from other agents, until
    $\mathcal{A}^{\alg}_{\mathcal{N}_i,t} \triangleq
    \{a^{\alg}_{j,t}\}_{j \in \mathcal{N}_i}$ has been received;
 
  \STATE \textbf{observe} the objective function
    $f_t :\mathcal{V}_{\mathcal{N}} \mapsto \mathbb{R}$;
 
  \FOR{every $a \in \mathcal{V}_i$}
    \STATE $r^{(i)}_{a,t} \leftarrow
      f_t\big(a \mid \mathcal{A}^{\alg}_{\mathcal{N}_i,t}\big)$;
  \ENDFOR
 
  \STATE $r^{(i)}_t \leftarrow \{r^{(i)}_{a,t}\}_{a \in \mathcal{V}_i}$;
 
  \STATE \textbf{input} $r^{(i)}_t$ to \ofsfI{i}
    (per line~11 of Algorithm~\ref{alg:ofsf});
 
\ENDFOR
\end{algorithmic}
\end{algorithm}

\subsection{\alg Algorithm}\label{subsec:o_dog}
We now introduce the Optimistic Distributed Online Greedy (\alg) algorithm (\Cref{alg:odog}). \alg retains the simultaneous decision structure of Distributed Online Greedy (\dog)~\cite{xu2026distributed} and equips each agent with an optimistic tracking learner, extending \dog's guarantees from static regret to the tracking regret of Problem~\ref{pr:dist_submod_max} by leveraging at the agent level \ofsf ((\Cref{alg:ofsf}).


In \alg, agent $i$ instead scores its actions' marginal gain given the actions of the its local communication neighborhood $\mathcal{N}_{i,t}$,
\begin{equation}
  r^{(i)}_{a,t} \;=\;
  f_t\big(a \,\big|\, \mathcal{A}^{\alg}_{\mathcal{N}_{i,t},t}\big),
  \qquad a \in \mathcal{V}_i ,
  \label{eq:odogreward}
\end{equation}
which involves no ordering of the agents. All agents evaluate eq.~\eqref{eq:odogreward} at the same time, as soon as their neighbors' actions have been received, enabling simultaneous decision making.

In more detail, \alg starts by instructing each agent $i \in \mathcal{N}$ to initialize an \ofsf{} --- we denote the \ofsf{} onboard each agent $i$ by \ofsfI{i}. Specifically, agent $i$ initializes \ofsfI{i} with its action set $\mathcal{V}_i$ and with the number $T$ of total time steps (line~1). Then, at each time step $t = 1,\dots,T$, all agents in parallel:
\begin{itemize}[leftmargin=*]
  \item Each agent $i$ forms a prediction $\hat r^{(i)}_t$ of its reward vector from the information it has observed up to $t-1$, inputs it to \ofsfI{i} per line~7 of Algorithm~\ref{alg:ofsf}, and draws an action $a^{\text{\small \alg}}_{i,t}$ given the probability distribution $p^{(i)}_t$ output by \ofsfI{i} (lines~3--6).
  \item All agents execute their actions $\{a^{\text{\small \alg}}_{i,t}\}_{i \in \mathcal{N}}$ simultaneously and broadcast them, so that each agent $i$ receives the neighbors' actions $\mathcal{A}^{\text{\small \alg}}_{\mathcal{N}_{i,t},t}$ within the time step $t$. The agents then observe $f_t : 2^{\mathcal{V}_{\mathcal{N}}} \mapsto \mathbb{R}$ (lines~7--8).
  \item Each agent $i$ computes the marginal gain in eq.~\eqref{eq:odogreward} of each of its actions $a \in \mathcal{V}_i$ (lines~9--12). No agent waits on any other as the computation requires only the neighbors' actions, which have already been received.
  \item Finally, each agent $i$ makes the vector of its computed marginal gains observable to \ofsfI{i} per line~11 of Algorithm~\ref{alg:ofsf} (line~13). With this input, \ofsfI{i} will compute $p^{(i)}_{t+1}$, \emph{i.e.}, the probability distribution over the agent $i$'s actions for the next time step $t+1$.
\end{itemize}

\section{Performance Guarantees of \alg}\label{sec:bound}

We bound \alg's suboptimality with respect to the optimal actions the agents would select if they knew $\{f_t\}_{t=1,\dots,T}$ a priori. Particularly, we bound \alg's \emph{tracking regret}, proving that it gracefully degrades with the environment's capacity to select $\{f_t\}_{t=1,\dots,T}$ adversarially, and \emph{improves} with the accuracy of the agents' predictions. 

We proceed in two steps. First, we bound the tracking regret of \ofsf~at the level of a single agent (Theorem~\ref{thm:ofsf}). Then, we bound the tracking regret of the network (Theorem~\ref{thm:odog}), which requires quantifying the environment's capacity to select $\{f_t\}_{t=1,\dots,T}$ adversarially and the agents' capacity to predict it. To these ends, we use the notation:
\begin{itemize}[leftmargin=*]
  \item $\mathcal{A}^{\mathrm{OPT}}_t \in
     \arg\max_{a_{i,t} \in \mathcal{V}_i,\,\forall i \in \mathcal{N}} f_t(\{a_{i,t}\}_{i \in \mathcal{N}})$; \emph{i.e.}, $\mathcal{A}^{\mathrm{OPT}}_t$ is the set of optimal actions the agents would select at time step $t$ if they knew the $f_t$ a priori;
  \item $a^{\mathrm{OPT}}_{i,t}$ is agent $i$'s action among the actions in $\mathcal{A}^{\mathrm{OPT}}_t$;
  \item $\mathcal{A}_t \triangleq \{a_{i,t}\}_{i \in \mathcal{N}}$; \emph{i.e.}, $\mathcal{A}_t$ is the set of all agents' actions at time step $t$.
\end{itemize}

\begin{definition}[$\alpha$-Approximate Tracking Regret]
\label{def:treg}
Consider any sequence of action sets $ \{\mathcal{A}_t\}_{t=1,\dots,T}$, and let $\alpha \in (0,1]$. Then, $\{\mathcal{A}_t\}_{t=1,\dots,T}$'s $\alpha$-approximate tracking regret is
\begin{equation}
  \mathrm{TReg}^{(\alpha)}_T\big(\{\mathcal{A}_t\}_{t=1,\dots,T}\big) \triangleq \alpha \sum_{t=1}^{T} f_t\big(\mathcal{A}^{\mathrm{OPT}}_t\big)
  - \sum_{t=1}^{T} f_t\big(\mathcal{A}_t\big).
  \label{eq:treg}
\end{equation}
\end{definition}

\Cref{eq:treg} evaluates $\{\mathcal{A}_t\}_{t=1,\dots,T}$'s suboptimality with respect to the optimal actions $\{\mathcal{A}^{\mathrm{OPT}}_t\}_{t=1,\dots,T}$ the agents would select if they knew $\{f_t\}_{t=1,\dots,T}$ a priori. The optimal value value $\sum_{t=1}^{T} f_t\big(\mathcal{A}^{\mathrm{OPT}}_t\big)$ is discounted by a factor of $\alpha$ in \cref{eq:treg} since solving Problem~\ref{pr:dist_submod_max} exactly is NP-hard even when $\{f_t\}_{t=1,\dots,T}$ are known a priori~\cite{sviridenko2017optimal}. Specifically, the best possible approximation bound in polynomial time is $1 - 1/e$ \cite{sviridenko2017optimal}, while Sequential Greedy and \onalg achieve the near-optimal bound $1/2$\cite{xu2023online}. In this paper, we establish $\alpha = 1/(1+\kappa_f)$, where $\kappa_f \in [0,1]$ is the curvature of $f$ (\Cref{def:curvature}). This is the curvature-dependent guarantee equals $1/2$ in the worst case $\kappa_f = 1$. In contrast to Sequential Greedy and \onalg, however, \alg attains it without ordering the agents.

\begin{definition}[Environment's Total Adversarial Effect]
\label{def:adv}
The environment's total adversarial effect over $T$ time steps is
\begin{equation}
  \Delta(T) \triangleq \sum_{t=1}^{T-1} \sum_{i \in \mathcal{N}} \mathbf{1}\big(a^{\mathrm{OPT}}_{i,t} \neq a^{\mathrm{OPT}}_{i,t+1}\big).
  \label{eq:adv}
\end{equation}
\end{definition}$\Delta(T)$ counts how many times the optimal actions of the agents must shift across the $T$ steps to adapt to the changing $f_{(\cdot)}$. The larger it is, the more frequently the agents need to change actions to remain optimal.
 
\begin{definition}[Network Prediction Error]
\label{def:prederr}
Agent $i$'s prediction error over $T$ time steps is $D_i(T) \triangleq \sum_{t=1}^{T}\|r^{(i)}_t - \hat r^{(i)}_t\|_\infty^2$, and the network's prediction error is
\begin{equation}
  \mathcal{D}(T) \triangleq \sum_{i \in \mathcal{N}} \mathbb{E}\big[\max\{D_i(T),1\}\big].
  \label{eq:netpred}
\end{equation}
\end{definition}$\mathcal{D}(T)$ is the network counterpart of ${D}(T)$ in~\cref{eq:DT}. Since $r^{(i)}_t,\hat r^{(i)}_t \in [0,1]^{|\mathcal{V}_i|}$, we have $\mathcal{D}(T) \leq |\mathcal{N}|\,T$ regardless of prediction accuracy.

\begin{theorem}[Agent-Level Approximation Performance]
\label{thm:ofsf}
Consider \ofsf ((\Cref{alg:ofsf}) run on the reward sequence $\{r_t\}_{t=1}^{T} \subseteq [0,1]^{|\mathcal{V}|}$ with predictions $\{\hat r_t\}_{t=1}^{T} \subseteq [0,1]^{|\mathcal{V}|}$, and let $\{i^\star_t\}_{t=1}^{T}$ be any comparator sequence that changes value at most $P(T)$ times. Then,
\begin{equation}
  \sum_{t=1}^{T} r_{i^\star_t,t} - \sum_{t=1}^{T} r_t^{\top} p_t \leq \tilde{O}\Big[\sqrt{\big(1+P(T)\big)\,\bar{D}(T)}\Big],
  \label{eq:thmofsf}
\end{equation}
where $\bar{D}(T) = \max\{D(T),1\}$ and $\tilde{O}[\cdot]$ hides log terms. The bound holds for every realization, with no assumption on how the rewards or predictions are generated.
\end{theorem}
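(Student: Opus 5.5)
We split the regret into a meta-learner part and a base-learner part. Fix any comparator sequence $\{i^\star_t\}$ with at most $P(T)$ switches and any base index $j\in\{1,\dots,J\}$:
\begin{equation*}
\sum_{t} r_{i^\star_t,t} - \sum_t r_t^\top p_t = \underbrace{\sum_t r_{i^\star_t,t} - \sum_t r_t^\top p^{(j)}_t}_{\text{base regret}} + \underbrace{\sum_t r_t^\top p^{(j)}_t - \sum_t r_t^\top p_t}_{\text{meta regret}}.
\end{equation*}
Line~10 gives $r_t^\top p_t = \sum_j q_{j,t}\, r_t^\top p^{(j)}_t$. Hence the meta regret is the regret of an optimistic Hedge over $J$ experts. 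The expert payoffs are $r_t^\top p^{(j)}_t$ and the hints are $\mu_{j,t}=\hat r_t^\top p^{(j)}_t$. We bound each part for arbitrary $j$, then choose $j$ at the end.

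\textbf{Base learners.} For fixed $j$, the plan is a standard optimistic-Hedge (predictable sequences) analysis adapted to fixed share. Define the potential $\Phi_t=\log\|w^{(j)}_t\|_1$. The played distribution is $p^{(j)}_t\propto w^{(j)}_t\odot e^{\gamma\hat r_t}$, so this is optimistic FTRL/OMD with an entropic regularizer and a hint. Follow Rakhlin--Sridharan: write the one-step regret against a point $u_t$ (a unit vector at $i^\star_t$, mixed with $\beta$ mass for fixed share). This gives
\begin{equation*}
\sum_t r_t^\top(u_t-p^{(j)}_t)\le \frac{\log|\mathcal V| + P(T)\log(|\mathcal V|/\beta) + T\beta}{\gamma^{(j)}} + \gamma^{(j)}\sum_t\|r_t-\hat r_t\|_\infty^2 .
\end{equation*}
The comparator-switch cost comes from the fixed-share mixing at line~16, exactly as in the Herbster--Warmuth/\fsf{} analysis: each switch costs $\log(|\mathcal V|/\beta)$, and not switching costs $\log(1/(1-\beta))$ per step. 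With $\beta=1/T$ the numerator is $O\big((1+P(T))\,\alpha\big)$, where $\alpha=3+\log(|\mathcal V|T)$.

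The key point is that the stability term involves only $\|r_t-\hat r_t\|^2_\infty$. This holds because the weights are updated with the true $r_t$ (line~14) from the unhinted $w^{(j)}_t$, while the hint only tilts the played point. The standard Bregman-divergence argument then bounds the play-vs-hint discrepancy by $\gamma\|r_t-\hat r_t\|_\infty^2$. The result is a base regret of $O\big((1+P(T))\alpha/\gamma^{(j)} + \gamma^{(j)} D(T)\big)$.

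\textbf{Meta learner.} The meta is optimistic Hedge on $J$ experts with hint $\mu_t$. Its regret within an epoch of constant $\varepsilon_\ell$ is at most $\log J/\varepsilon_\ell + \varepsilon_\ell\sum_{t\in\text{epoch}}\|r_t^\top P_t-\mu_t\|^2_\infty$. We then control the doubling trick (lines~18--20). Epoch $\ell$ ends once $B_t$ exceeds $4^\ell$, and each increment is at most $1$. So the accumulated loss inside epoch $\ell$ is at most $4^\ell+1$, and the regret of epoch $\ell$ is $O(\sqrt{\log J}\,2^\ell)$. Summing the geometric series up to the last epoch $L$, where $4^{L-1}\le B_T$, gives meta regret $O(\sqrt{\log J\,\max\{B_T,1\}})$.

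Finally, $|r_t^\top p^{(j)}_t-\hat r_t^\top p^{(j)}_t|\le\|r_t-\hat r_t\|_\infty$, so $B_T\le D(T)$. The meta regret is therefore $O(\sqrt{\log J\,\bar D(T)})$.

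\textbf{Choosing $j$.} The ideal rate is $\gamma^\star=\sqrt{(1+P(T))\alpha/\bar D(T)}$. The grid $\gamma^{(j)}=\sqrt{\alpha/T}\,2^{j-1}$ for $j\le J=\lceil\log_2 2T\rceil+1$ covers $[\sqrt{\alpha/T},\,\sqrt{\alpha/T}\cdot 2T]$. Since $1\le\bar D(T)\le T$ and $1+P(T)\le T$, we have $\gamma^\star\ge\sqrt{\alpha/T}$ always.

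If $\gamma^\star$ exceeds the grid, or is simply large, we use the trivial bound: any regret is at most $T$, while the target bound $\sqrt{(1+P)\bar D}\ge$ the same quantity up to logs. We expect clipping at the top end to be the delicate case. One might also need base learners to be capped at effectively $\gamma\le1$ for the exponential-weights stability inequality $e^x\le1+x+x^2$ to apply. In that case one checks that for $\gamma^\star>1$ the bound $(1+P)\alpha$ already dominates.

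Otherwise pick $j$ with $\gamma^{(j)}\in[\gamma^\star/2,\gamma^\star]$. The base regret is then $O(\sqrt{(1+P)\alpha\bar D})$. Adding the $O(\sqrt{\log J\,\bar D})$ meta term gives $\tilde O(\sqrt{(1+P(T))\bar D(T)})$.

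The main obstacle is the base-learner step: making the optimistic stability term depend only on $\|r_t-\hat r_t\|_\infty^2$ while fixed-share mixing is interleaved with the hinted play. The approach would be to view line~8 as an optimistic OMD half-step from $w^{(j)}_t$, so that fixed share modifies only the true-loss iterate. The rest of the argument follows known two-layer analyses of \fsf{}.
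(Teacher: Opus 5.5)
Your proposal follows the paper's proof. It uses the same split into a dynamic base regret plus a meta regret against a fixed expert $j$, and the same base bound $\alpha(1+P)/\gamma^{(j)}+\gamma^{(j)}D(T)$. The paper does not derive that bound; it imports it as Lemma~13 of \cite{zzlz2022} with $P^u=2P(T)$. Your mechanism, in which the hinted play and the true-reward update both start from the unhinted iterate, is exactly why it holds. The rest also matches: the epoch-wise doubling argument with per-epoch budget at most $4^\ell+1$, the inequality $B(T)\le D(T)$, and the choice of a grid point near $\gamma^\star$.

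The one thing to fix is your treatment of the top end of the grid. It is unnecessary, and as written it is wrong.

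\emph{Why it is unnecessary.}
\begin{itemize}
\item The grid always contains $\gamma^\star$. Using $1+P(T)\le T$ and $\bar D(T)\ge 1$, you get $\gamma^\star\le\sqrt{\alpha T}$. Meanwhile $\gamma^{(J)}=\sqrt{\alpha/T}\,2^{\lceil\log_2 2T\rceil}\ge 2\sqrt{\alpha T}$. So your ``otherwise'' branch is the only case, and a $j$ with $\gamma^{(j)}\in(\gamma^\star/2,\gamma^\star]$ always exists.
\item No cap $\gamma\le1$ is needed. The optimistic stability term is controlled by H\"older plus Pinsker: $\langle \tilde x_{t+1}-x_t,\,r_t-\hat r_t\rangle-\tfrac{1}{\gamma}\mathrm{KL}(\tilde x_{t+1}\,\|\,x_t)\le\tfrac{\gamma}{2}\|r_t-\hat r_t\|_\infty^2$. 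This holds for every $\gamma>0$, and no $e^x\le 1+x+x^2$ step is involved. The same applies to the meta rate $\varepsilon_0=\sqrt{\ln J}$. The paper in fact relies on base rates of order $\sqrt{\alpha T}$.
\end{itemize}

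\emph{Why the fallbacks are wrong.}
\begin{itemize}
\item The trivial bound $T$ is not $\tilde O(\sqrt{(1+P(T))\bar D(T)})$ when $\gamma^\star$ is large. Take $\bar D(T)=1$ and $P(T)\approx T$: then $\gamma^\star\approx\sqrt{\alpha T}$, but the target is about $\sqrt{T}$.
\item Capping at $\gamma=1$ when $\gamma^\star>1$ yields $O((1+P(T))\alpha)$. This exceeds $\sqrt{(1+P(T))\alpha\bar D(T)}$ by the factor $\gamma^\star$, which can be of order $\sqrt{\alpha T}$.
\end{itemize}

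Delete that branch, replace it with the two observations above, and the argument is complete.
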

 
Theorem~\ref{thm:ofsf} is in order the same as \cite[Corollary~1]{matsuoka2021tracking} with the horizon $T$ replaced by the prediction error $\bar{D}(T)$. \ofsf provides a $\tilde O[\sqrt{(1+P(T))\bar{D}(T)}]$ regret against the prediction-free $\tilde O[\sqrt{(1+P(T))T}]$ regret of \fsf{}. Since $\bar{D}(T) \leq T$, it is never weaker. The proof is in Appendix~\ref{app:ofsf}.

We also need the following definitions to present the approximation performance of \alg.

\begin{definition}[Centralization of Information, adapted from~\cite{xu2025communication}]
\label{def:coin}
For each time step $t\in[T]$, consider a function $f_t:\calV_\calN\mapsto\mathbb{R}$ and communication neighborhoods $\{\calN_{i,t}\}_{i\myin\calN}$, where each agent $i\in\calN$ has selected an action $a_{i,t}$, and let $\calN_{i,t}^c\triangleq\calN\setminus(\calN_{i,t}\cup\{i\})$. Then, at time $t$, agent $i$'s \emph{centralization of information} is defined as
\begin{equation}\label{eq:ourcurv}
  \ourcurv_{f_t,i}(\calN_{i,t})\triangleq f_t(a_{i,t}) - f_t(a_{i,t}\,|\,\{a_{j,t}\}_{j\myin\calN_{i,t}^c}).
\end{equation}
\end{definition}
To capture the suboptimality cost of decentralization, $\ourcurv_{f_t,i}$ measures how much $a_{i,t}$ can overlap with the actions of agents that are not agent $i$'s neighbors at time $t$. In the best scenario, where $a_{i,t}$ does \underline{not} overlap with other actions at all, \ie $f_t(a_{i,t}\,|\,\{a_{j,t}\}_{j\myin\calN_{i,t}^c})=f_t(a_{i,t})$, then
$\ourcurv_{f_t,i}=0$. In the worst case instead where $a_{i,t}$ is fully redundant, \ie $f_t(a_{i,t}\,|\,\{a_{j,t}\}_{j\myin\calN_{i,t}^c})=0$, then $\ourcurv_{f_t,i}=f_t(a_{i,t})$.

\begin{definition}[Curvature~\cite{conforti1984submodular}]\label{def:curvature}
        The curvature of a normalized submodular function $f\colon 2^{\calV}\mapsto \mathbb{R}$ is defined as
        \begin{equation}
            \kappa_f\triangleq 1-\min_{\elem\in\calV}{[f(\calV)-f(\calV\setminus\{\elem\})]}/{f(\elem)}.
        \end{equation}
\end{definition}$\kappa_f$ measures how far~$f$ is from modularity: if $\kappa_f=0$, then  $f(\calV)-f(\calV\setminus\{v\})=f(v)$, $\forall v\in\calV$, \ie $f$ is modular. In~contrast, $\kappa_f=1$ in the extreme case where there exist $v\in\calV$ such that $f(\calV)=f(\calV\setminus\{v\})$, \ie $v$~has no contribution in the presence of $\calV\setminus\{v\}$.

\begin{theorem}[Network-Level Approximation Performance]\label{thm:odog}
Over $t\in [T]$, given the communication neighborhoods $\{\calN_{i,t}\}_{i\in\calN}$, \alg instructs each agent $i\in\calN$ to select actions $\{a_{i,t}\}_{t\in[T]}$ that guarantee
\begin{itemize}[leftmargin=3.5mm]
    \item If the network is \textbf{fully centralized}, \ie $\calN_{i,t}=\calN\setminus\{i\}$ for all $t\in[T]$,
    {\small\begin{equation}\label{eq:thm-1}
        \mathbb{E}\Big[\frac{1}{T}\sum_{t=1}^{T}f_t(\calA_t)\Big] \geq
        \frac{1}{1+\curv_f}\,\mathbb{E}\Big[\frac{1}{T}\sum_{t=1}^{T}f_t(\calA^\opt_t)\Big]
        - \phi(T).
    \end{equation}}
    \item If the network is \textbf{fully decentralized}, \ie $\calN_{i,t}=\emptyset$ for all $t\in[T]$,
    {\small\begin{equation}\label{eq:thm-2}
        \mathbb{E}\Big[\frac{1}{T}\sum_{t=1}^{T}f_t(\calA_t)\Big] \geq
        (1-\curv_f)\,\mathbb{E}\Big[\frac{1}{T}\sum_{t=1}^{T}f_t(\calA^\opt_t)\Big]
        - \chi(T).
    \end{equation}}
    \item If the network is \textbf{anything in between} fully centralized and fully decentralized, \ie $\calN_{i,t}\subseteq\calN\setminus\{i\}$, and $f_t$ is 2nd-order submodular (\Cref{def:conditioning}),
    {\small\begin{align}
    &\mathbb{E}\Big[\frac{1}{T}\sum_{t=1}^{T}f_t(\calA_t)\Big] \geq \frac{1}{1+\curv_f}\,\mathbb{E}\Big[\frac{1}{T}\sum_{t=1}^{T}f_t(\calA^\opt_t)\Big] \nonumber\\
    &- \frac{\kappa_f}{1+\curv_f}\mathbb{E}\Big[\frac{1}{T}\sum_{t=1}^{T}\sum_{i\in\calN}
  \ourcurv_{f_t,i}(\calN_{i,t})\Big]
    - \psi(T).
      \label{eq:thm-4}
    \end{align}}
    where {\small $\phi(T),\chi(T),\psi(T) \! = \! \tilde{O}\left( \! \sqrt{\mathcal{D}(T)\, \! \big[\mathbb{E}[\Delta(T)]\!+\!|\mathcal{N}|\big]}\,\big/\,T\right)$}, $\kappa_{f}\triangleq\max_{t\in [T]}\kappa_{f_t}$, and $\Delta(T)$ and $\mathcal{D}(T)$ are given in \Cref{def:adv,def:prederr}.
\end{itemize}
Particularly, the expectation is due to \alg's internal randomness and the environment's, and $\tilde{O}(\cdot)$ hides $\log$~terms.
\end{theorem}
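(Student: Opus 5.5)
The proof splits into a learning part and an approximation part. The learning part is a per-agent regret statement obtained from Theorem~\ref{thm:ofsf}. The approximation part is a deterministic, per-round inequality about the realized actions $\calA_t$, in the style of the \dog analysis~\cite{xu2026distributed} and the curvature argument of~\cite{conforti1984submodular}.

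\textbf{Step 1 (per-agent regret).} For each agent $i$, apply Theorem~\ref{thm:ofsf} to \ofsfI{i} with rewards $r^{(i)}_{a,t}=f_t(a\,|\,\calA_{\calN_{i,t},t})$. The comparator is $i^\star_t=a^\opt_{i,t}$, which changes $P_i(T)=\sum_t\mathbf 1(a^\opt_{i,t}\neq a^\opt_{i,t+1})$ times. After normalizing $f_t$ to $[0,1]$, this gives, realization-wise,
\[
\sum_t f_t(a^\opt_{i,t}\,|\,\calA_{\calN_{i,t},t})-\sum_t r^{(i)\top}_t p^{(i)}_t\le\tilde O\big(\sqrt{(1+P_i(T))\bar D_i(T)}\big).
\]
Since $a_{i,t}\sim p^{(i)}_t$ is drawn independently of the neighbors' draws at round $t$, conditional expectation turns $r^{(i)\top}_tp^{(i)}_t$ into $\mathbb E f_t(a_{i,t}\,|\,\calA_{\calN_{i,t},t})$. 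Next sum over $i$, take expectations, and apply Cauchy--Schwarz twice: once over agents and once via Jensen on $\mathbb E\sqrt{XY}\le\sqrt{\mathbb E X\,\mathbb E Y}$. Using $\sum_i(1+P_i(T))=|\calN|+\Delta(T)$, this yields
\[
\mathbb E\sum_{t,i}f_t(a^\opt_{i,t}\,|\,\calA_{\calN_{i,t},t})\le\mathbb E\sum_{t,i}f_t(a_{i,t}\,|\,\calA_{\calN_{i,t},t})+\tilde O\big(\sqrt{\mathcal D(T)(\mathbb E\Delta(T)+|\calN|)}\big).
\]
Dividing by $T$ produces the common error term $\phi=\chi=\psi$.

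\textbf{Step 2 (per-round approximation).} Fix $t$ and write $\calA=\calA_t$ and $\calA^\opt=\calA^\opt_t$.

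\emph{Left side.} By monotonicity and submodularity,
\[
f(\calA^\opt)\le f(\calA^\opt\cup\calA)\le f(\calA)+\sum_i f(a^\opt_i\,|\,\calA)\le f(\calA)+\sum_i f(a^\opt_i\,|\,\calA_{\calN_i}),
\]
where the last inequality uses $\calA_{\calN_i}\subseteq\calA$. A sharper version is needed to get the $\kappa_f$ factor. The curvature definition gives $f(\calA\cup\calA^\opt)\ge f(\calA^\opt)+(1-\kappa_f)\sum_i f(a_i)$ for the elements of $\calA\setminus\calA^\opt$, with the usual bookkeeping for shared elements. Combining this with $f(\calA\cup\calA^\opt)\le f(\calA)+\sum_i f(a^\opt_i\,|\,\calA_{\calN_i})$ gives
\[
f(\calA^\opt)+(1-\kappa_f)\textstyle\sum_i f(a_i)\le f(\calA)+\sum_i f(a^\opt_i\,|\,\calA_{\calN_i}).
\]

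\emph{Right side.} Compare $\sum_i f(a_i\,|\,\calA_{\calN_i})$ with $f(\calA)$ in each case.
\begin{itemize}
\item \emph{Fully centralized.} Here $f(a_i\,|\,\calA\setminus a_i)\ge(1-\kappa_f)f(a_i)$, and $\sum_i f(a_i\,|\,\calA\setminus a_i)\le f(\calA)$. Substituting into the display gives $f(\calA^\opt)\le(1+\kappa_f)f(\calA)$ plus the regret term.
\item \emph{Fully decentralized.} Here $\sum_i f(a_i)$ appears directly. Using $f(\calA)\ge(1-\kappa_f)\sum_if(a_i)$ and the crude upper bound $\sum_if(a^\opt_i)\ge f(\calA^\opt)$ in the regret inequality yields the $1-\kappa_f$ factor.
\item \emph{In between.} The extra term is $\sum_i[f(a_i\,|\,\calA_{\calN_i})-f(a_i\,|\,\calA\setminus a_i)]$, the loss from not conditioning on non-neighbors $\calN_i^c$. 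Apply 2nd-order submodularity (Definition~\ref{def:conditioning}) with $\calC=\emptyset$, $\calA\to\calA_{\calN_i^c}$, $\calB\to\calA_{\calN_i}$:
\[
f(a_i\,|\,\calA_{\calN_i})-f(a_i\,|\,\calA\setminus a_i)\le f(a_i)-f(a_i\,|\,\calA_{\calN_i^c})=\ourcurv_{f,i}(\calN_i).
\]
Propagating this through the curvature step multiplies it by $\kappa_f/(1+\kappa_f)$.
\end{itemize}

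\textbf{Step 3.} Average over $t$, take expectations, and combine with Step~1. Using $\kappa_f=\max_t\kappa_{f_t}$ to pull the factors out of the sum yields the three bounds \eqref{eq:thm-1}, \eqref{eq:thm-2}, and \eqref{eq:thm-4}.

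The main obstacle is Step~2: arranging the curvature inequalities so that the coefficients come out exactly as $1/(1+\kappa_f)$ and $\kappa_f/(1+\kappa_f)$, and handling elements common to $\calA$ and $\calA^\opt$. The first thing I would try is to follow the curvature analysis of~\cite{conforti1984submodular} for greedy, replacing the greedy inequality with the regret inequality of Step~1. A secondary subtlety is justifying the independence used to pass from $p^{(i)}_t$ to realized actions when $f_t$ is adaptive.
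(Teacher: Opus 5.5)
Your architecture matches the paper's. The paper obtains Theorem~\ref{thm:odog} by following the DOG analysis~\cite{xu2026distributed}, with Theorem~\ref{thm:ofsf} supplying a realization-wise tracking regret for each agent against $a^{\opt}_{i,t}$. Cauchy--Schwarz over agents and expectations then turns $\sum_i(1+P_i(T))=|\calN|+\Delta(T)$ and $\sum_i\mathbb{E}[\bar D_i(T)]=\mathcal{D}(T)$ into the stated rate. Your Step~1 is sound. The independence you worry about is exactly what Problem~\ref{pr:dist_submod_max} grants: $f_t$ reacts only to past actions, so given the history, $a_{i,t}$ is independent of $(f_t,\calA_{\calN_{i,t},t})$. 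Your left-side inequality is valid and is the right ingredient. The shared-element bookkeeping closes because $f(a_i\,|\,\calA_{\calN_i})\ge(1-\kappa_f)f(a_i)$ for every $i$, and your decentralized bullet is correct.

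The gap is the coefficient in the in-between case. Write $g_i\triangleq f(a_i\,|\,\calA_{\calN_i})$ and $h_i\triangleq f(a_i\,|\,\calA\setminus\{a_i\})$. Schematically (suppressing $t$, with $R$ the Step~1 term valid after summing over rounds and taking expectations), your left side plus Step~1 give $f(\calA^{\opt})+(1-\kappa_f)\sum_if(a_i)\le f(\calA)+\sum_ig_i+R$. Inserting your 2nd-order bound $g_i\le h_i+\ourcurv_{f,i}(\calN_i)$ together with $\sum_ih_i\le f(\calA)\le\sum_if(a_i)$, which is all that ``propagating through the curvature step'' amounts to, yields $f(\calA^{\opt})\le(1+\kappa_f)f(\calA)+\sum_i\ourcurv_{f,i}(\calN_i)+R$. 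So the decentralization term carries $1/(1+\kappa_f)$, not $\kappa_f/(1+\kappa_f)$, and eq.~\eqref{eq:thm-4} is not reached. The missing idea is to split each gain and bound the two parts differently, one by plain submodularity and one by 2nd-order submodularity:
\[
g_i=\kappa_f\,g_i+(1-\kappa_f)\,g_i\le\kappa_f\big(h_i+\ourcurv_{f,i}(\calN_i)\big)+(1-\kappa_f)\,f(a_i).
\]
Summing and using $\sum_ih_i\le f(\calA)$, the term $(1-\kappa_f)\sum_if(a_i)$ cancels exactly against your left side. This leaves $f(\calA^{\opt})\le(1+\kappa_f)f(\calA)+\kappa_f\sum_i\ourcurv_{f,i}(\calN_i)+R$, which is eq.~\eqref{eq:thm-4}. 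For the cancellation across rounds, first weaken $1-\kappa_{f_t}$ to $1-\kappa_f$ on the left, which is allowed since $f_t(a_{i,t})\ge0$. In the centralized case $g_i=h_i$, and the same split yields the result without 2nd-order submodularity. There, the inequality you cite, $h_i\ge(1-\kappa_f)f(a_i)$, points the wrong way; what is actually needed is $h_i\le f(a_i)$, or equivalently subadditivity of $f$.
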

 
 
\Cref{thm:odog} is the counterpart of \cite[Theorem~1]{xu2026distributed} with the static optimum $\solopt$ replaced by the time-varying optimum $\calA^\opt_t$. Relative to the tracking-regret bound
$\tilde O\{\sqrt{|\calN|T[\Delta(T)+|\calN|]}\}$ of \onalg~\cite[Theorem~1]{xu2023online}, the product $|\calN|\,T$ is replaced by $\mathcal{D}(T)\le|\calN|\,T$.

The role of prediction is to make $\phi(T)$, $\chi(T)$ and $\psi(T)$ vanish in regimes where they would otherwise not. Without prediction, $\mathcal{D}(T)=\Theta(|\calN|T)$ and the rate vanishes only if $\Delta(T)$ grows sublinearly in $T$, which fails when the environment reacts to the agents, for example when targets evade the robots tracking them. Since $\Delta(T)\le|\calN|(T-1)$ unconditionally, the rate vanishes whenever $\mathcal{D}(T)=\tilde o(T/|\calN|)$, \emph{regardless of how $\Delta(T)$ grows}: an environment that changes frequently but predictably is trackable.

   

The bound in \cref{eq:thm-4} depends on $\scenario{coin}_{f_t,i}$ capturing the suboptimality due to decentralization. \cref{eq:thm-1,eq:thm-2,eq:thm-4} imply \alg's suboptimality will improve if the agents have larger multi-hop coordination neighborhoods. Importantly, the $1/(1+\curv_f)$ suboptimality bound with a fully connected network recovers the bound in~\cite{conforti1984submodular} and is near-optimal as the best possible bound for~\cref{eq:prob} is $1-\kappa_f/e$~\cite{sviridenko2017optimal}.\footnote{{The bounds $1/(1+\curv_f)$ and $1-\kappa_f/e$ become $1/2$ and $1-1/e$ when, in the worst case, $\kappa_f=1$.}}

The proof of \Cref{thm:odog} follows \cite[Theorem~1]{xu2026distributed}, bounding the network's tracking regret by the agents' individual tracking regrets plus a decentralization term, with two changes: the comparator $\calA^\opt_t$ is time-varying, and each agent's regret is bounded by \Cref{thm:ofsf}, 
valid for every realization of the environment and neighbor action sequence. Summing over agents via Cauchy--Schwarz yields the rates.

\section{Numerical Evaluation}
\label{sec:sim}

We compare \alg with \onalg~\cite{xu2023online} and its optimistic version \oosg, which uses \ofsf{} as the agent-level subroutine instead of \fsf{}. We scale the multi-robot target-tracking scenario in~\cite{xu2023online} to $N=20$ robots and 16 targets in a 300\,m radius arena, and to $N=50$ robots and 40 targets in a 474\,m radius arena. The specific radius for $N=50$ was chosen to simulate a similar density as the 20 agent scenario.

\textbf{Setup.} The robots observe the targets' current positions but not their future motion. Each robot is a unicycle whose action is one of 31 (speed, turn-rate) commands. Speeds are chosen from $\{0,4,8,12,16\}$\,m/s, turn rates from $\{0,\pm0.35,\pm1,\pm2\}$\,rad/s, and we limit $v|\omega|\le20$\,m/s$^2$. An agent holds its chosen action for one decision period $P$. The world evolves in steps of $0.1$\,s over a $120$\,s episode, so each learner plays $T=120\,\mathrm{s}/P$ rounds. Half of the targets move \emph{randomly} (a new heading and a speed in $[4,12]$\,m/s every 2\,s on average), a quarter follow \emph{polylines} with 1--4 turns, and a quarter are \emph{adversarial}: they move at 8\,m/s by default, but evade at 12\,m/s for 6\,s when a robot comes within 20\,m. This evasion is in the direction that most reduces the team's coverage of them. Their evasion has a 6\,s cooldown, as in~\cite{xu2023online}. The robots maximize
\begin{equation}
  f_t(\mathcal A) = \sum_{q}\Big[\, d_{\max} - \Big(\textstyle\sum_{i\in\mathcal C_{q,t}} \|x_{i,t}-y_{q,t}\|^{-1} + d_{\max}^{-1}\Big)^{-1} \Big], \label{eq:coverage}
\end{equation}
where $\mathcal C_{q,t}=\{i\in\mathcal N:\|x_{i,t}-y_{q,t}\|\le r\}$ is the set of robots within the coverage radius $r=150$\,m of target $q$. A robot outside $\mathcal C_{q,t}$ does not contribute to target $q$, while $f_t$ remains monotone submodular. \Cref{eq:coverage} is a bounded variant of the objective of~\cite{xu2023online}, with $d_{\max}=300$\,m. A robot's reward is its marginal gain on the targets' realized positions during the period, conditioned on its teammates' actions. \alg and \oosg predict this reward by a constant-turn-rate extrapolation of each target. For computing prediction of marginal gains, \alg uses each neighbor's most probable action under its previous-round weights (included in communication), whereas \oosg uses its predecessors' actual choices. The radius $r$ applies to all algorithms.

\textbf{Communication-limited decision periods.} In \onalg and \oosg, robot $i$ receives the actions of its $i-1$ predecessors over $N-1$ sequential hand-offs, i.e., $N(N-1)/2$ messages per round. In \alg, each robot exchanges realized and most probable actions (given the robot's action distribution at the time) with its $k$ nearest teammates in a single simultaneous exchange, i.e., $Nk$ messages per round. Following the decision-time accounting of~\cite{xu2025communication}, a round takes
\begin{align*}
  \tau_{\textsc{o-osg}} &= N t_c + (N-1)\,\tau + \textstyle\sum_{m=1}^{N-1}\lceil m/c\rceil\, S/B, \\
  \tau_{\textsc{o-dog}} &= t_c + \tau + \lceil k/c\rceil\, S/B,
\end{align*}
where $t_c=0.50$\,ms is a robot's computation per round (measured in our simulator), $\tau=50$\,ms is the per-hop latency (measured short-range Zigbee round trips stay below 100\,ms~\cite{haque2022zigbee}), $S=100$\,B is the size one agent's single round communication, $c=5$ messages can be exchanged simultaneously, and $B$ is the link rate: $0.25$\,Mbps (Digi XBee~3 Zigbee) or $100$\,Mbps (Silvus SL5200), as in~\cite{xu2025communication}. Each algorithm's period $P$ is the smallest multiple of $0.1$\,s that is at least its round time $\tau_{\textsc{alg}}$ (\Cref{tab:comm}). \alg's round is at most 64\,ms for every $k\le20$, so every period we test is feasible. We run each $k$ at its best measured period among $\{0.2,0.3,0.5\}$\,s (0.3\,s for all $k\ge5$). \onalg and \oosg run at their shortest feasible period as longer periods only degrade tracking at those time scales. We report the mean distance from each target to its nearest robot over 20 random initializations.

\begin{table}[t]
  \centering
  \caption{\small Round time, decision period $P$, and mean target-to-nearest-robot distance (m; mean over 20 initializations) for 20 and 50 robots at two link rates. \alg runs at $P=0.3$\,s for both $k$.}
  \label{tab:comm}
  \footnotesize
  \setlength{\tabcolsep}{4pt}
  \begin{tabular}{lcccc}
    \toprule
     & \multicolumn{2}{c}{$N=20$} & \multicolumn{2}{c}{$N=50$} \\
    \cmidrule(lr){2-3}\cmidrule(lr){4-5}
     & 0.25\,Mbps & 100\,Mbps & 0.25\,Mbps & 100\,Mbps \\
    \midrule
    OSG/O-OSG ($P$) & 1.11\,s (1.2) & 0.96\,s (1.0) & 3.32\,s (3.4) & 2.48\,s (2.5) \\
    O-DOG ($P$), & $\le$64\,ms & 51\,ms & $\le$64\,ms & 51\,ms \\
    \midrule
    OSG & 63.5 & 60.0 & 59.9 & 60.7 \\
    O-OSG & 30.9 & 29.9 & 43.4 & 38.3 \\
    O-DOG, $k=5$ & 22.5 & 22.5 & 26.3 & 26.3 \\
    O-DOG, $k=10$ & \textbf{16.6} & \textbf{16.6} & \textbf{19.1} & \textbf{19.1} \\
    \bottomrule
  \end{tabular}
\end{table}


\begin{figure*}[t]
  \centering
  \includegraphics[width=\textwidth]{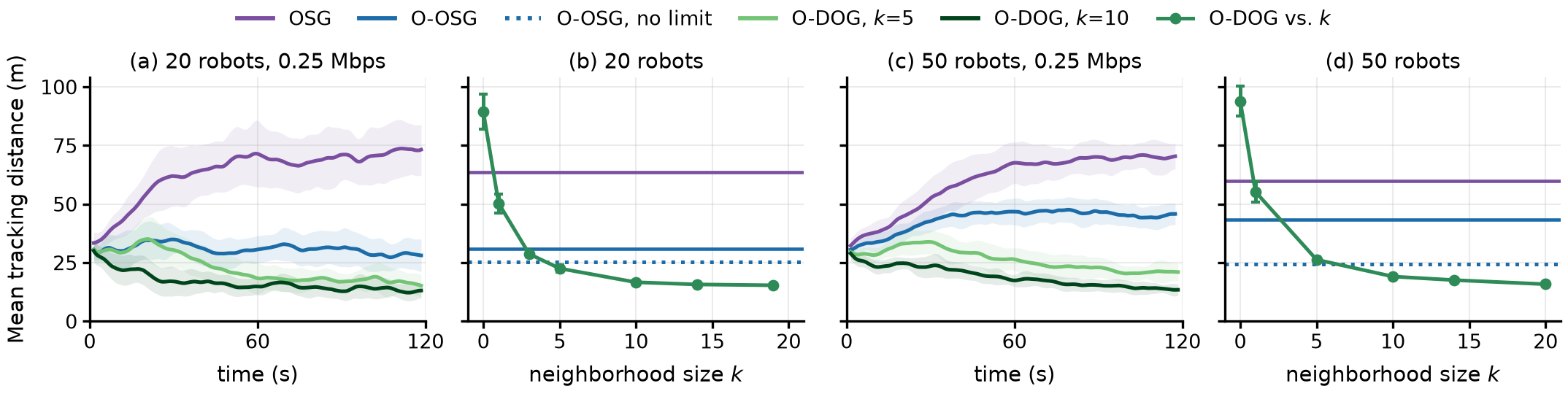}
  \caption{\small (a, c) Mean tracking distance with 20 and 50 robots at 0.25\,Mbps (3\,s moving average; shaded: $\pm1$ standard deviation over 20 seeds); \onalg and \oosg decide every 1.2\,s (a) and 3.4\,s (c), \alg every 0.3\,s. (b, d) Episode mean versus \alg's neighborhood size $k$ ($Nk$ messages per round), each $k$ at its best period (error bars: 95\% confidence intervals). Solid horizontal lines: \onalg and \oosg at their 0.25\,Mbps periods ($N(N-1)/2$ messages per round); dotted: \oosg without communication limits ($P=0.5$\,s).}
  \label{fig:results}
\end{figure*}

\textbf{Results.} \alg's learner and predictions help even under sequential coordination: \oosg tracks 51\% closer than \onalg with 20 robots and 28\% closer with 50 (\Cref{tab:comm}). However, sequential communication limits how often \oosg can decide. Each of its $N-1$ hand-offs costs one link latency, so its round grows with the team: it re-plans only a quarter as often as \alg with 20 robots and less than a tenth as often with 50, whereas \alg's single simultaneous exchange does not depend on $N$. Since the targets keep moving between decisions, the slower loop costs accuracy: with about half of \oosg's messages, \alg with $k=5$ tracks 27\% closer with 20 robots and 39\% closer with 50 at 0.25\,Mbps (\Cref{fig:results}a,c). A faster link barely helps \oosg because its round is dominated by latency rather than bandwidth. With $k\ge10$, \alg is even better than \oosg without any communication limit (by up to 34\%). This is because \alg allows every robot to condition on its nearest teammates, whereas in \oosg the ordering is fixed and the robots earlier in the sequence decide without knowing the choices of those after them.

\Cref{fig:results}b,d show how much coordination \alg needs. Without neighbors ($k=0$), the robots ignore each other and track worse than \onalg; a single neighbor already beats \onalg with 20 robots and roughly matches it with 50, and from $k=5$ onward \alg beats \oosg on every initialization at both link rates. Beyond $k\approx10$ the gains are marginal, which is consistent with the locality of \cref{eq:coverage}. Any robot's reward depends only on teammates within the coverage radius $r$, so additional, more distant neighbors add little value. %

\section{Conclusion} \label{sec:con}
We presented the Optimistic Distributed Online Greedy (\alg) algorithm for multi-agent submodular coordination in unpredictable environments. Leveraging tools from optimistic online learning, \alg enables agents to make simultaneous online decisions over arbitrary, possibly time-varying, communication networks, while exploiting predictions of their upcoming rewards. We bounded \alg's tracking regret in terms of the agents' prediction error rather than the time horizon: the bound never exceeds its prediction-free counterpart, and remains sublinear even when the optimal actions change at every time step, provided the environment changes predictably. In simulated multi-robot target tracking with a 20 agent network, \alg tracked the targets by up to 27\% closer than the state-of-the-art sequential algorithm \onalg equipped with optimism while transmitting approximately half the number of messages. On a 50 agent network that gap grew to 40\%. Compared to \onalg without any optimism, \alg tracks up to 70\% closer, thereby achieving a scalable distributed submodular coordination architecture for tracking regret.

\textbf{Future work.} We will extend \alg's guarantees to actions whose rewards are revealed only over multiple time steps, such as multi-step trajectories executed in the style of receding horizon, by using delayed feedback techniques~\cite{sharma2026async}. We will also extend \alg to bandit feedback, where each agent observes the reward of its selected action only.

\bibliographystyle{IEEEtran}
\bibliography{references}

@Article{haque2022zigbee,
AUTHOR = {Haque, Khandaker Foysal and Abdelgawad, Ahmed and Yelamarthi, Kumar},
TITLE = {Comprehensive Performance Analysis of Zigbee Communication: An Experimental Approach with XBee S2C Module},
JOURNAL = {Sensors},
VOLUME = {22},
YEAR = {2022},
NUMBER = {9},
ARTICLE-NUMBER = {3245},
URL = {https://www.mdpi.com/1424-8220/22/9/3245},
PubMedID = {35590935},
ISSN = {1424-8220},
DOI = {10.3390/s22093245}
}

@misc{sharma2026async,
      title={Asynchronous Distributed Bandit Submodular Maximization under Heterogeneous Communication Delays}, 
      author={Pranjal Sharma and Zirui Xu and Vasileios Tzoumas},
      year={2026},
      eprint={2604.06430},
      archivePrefix={arXiv},
      primaryClass={eess.SY},
      url={https://arxiv.org/abs/2604.06430}, 
}

@inproceedings{sisalem2024online, author = {Salem, Tareq Si and {\"O}zean, G{\"o}zde and Nikolaou, Iasonas and Terzi, Evimaria and Ioannidis, Stratis}, title = {Online submodular maximization via online convex optimization}, year = {2024}, isbn = {978-1-57735-887-9}, publisher = {AAAI Press}, url = {https://doi.org/10.1609/aaai.v38i13.29425}, doi = {10.1609/aaai.v38i13.29425}, articleno = {1677}, numpages = {9}, series = {AAAI'24/IAAI'24/EAAI'24} }

@misc{navsalkar2026,
      title={Self-Adaptive Learning and Model Predictive Control for Tracking Unknown Dynamics with No Regret}, 
      author={Atharva Navsalkar and Hongyu Zhou and Vasileios Tzoumas},
      year={2026},
      eprint={2607.26370},
      archivePrefix={arXiv},
      primaryClass={cs.RO},
      url={https://arxiv.org/abs/2607.26370}, 
}

@article{chung2011search,
  author  = {Chung, Timothy H. and Hollinger, Geoffrey A. and Isler, Volkan},
  title   = {Search and pursuit-evasion in mobile robotics: A survey},
  journal = {Autonomous Robots},
  volume  = {31}, number = {4}, pages = {299--316}, year = {2011},
  doi     = {10.1007/s10514-011-9241-4}
}

@article{rudenko2020human,
  author  = {Rudenko, Andrey and Palmieri, Luigi and Herman, Michael and
             Kitani, Kris M. and Gavrila, Dariu M. and Arras, Kai O.},
  title   = {Human motion trajectory prediction: A survey},
  journal = {The International Journal of Robotics Research},
  volume  = {39}, number = {8}, pages = {895--935}, year = {2020},
  doi     = {10.1177/0278364920917446}
}

@inproceedings{zhang2025near,
  author    = {Zhang, Qixin and Wan, Zongqi and Yang, Yu and Shen, Li and Tao, Dacheng},
  title     = {Near-Optimal Online Learning for Multi-Agent Submodular
               Coordination: Tight Approximation and Communication Efficiency},
  booktitle = {International Conference on Learning Representations (ICLR)},
  year      = {2025}
}

@inproceedings{chiang2012,
  author    = {Chiang, Chao-Kai and Yang, Tianbao and Lee, Chia-Jung and
               Mahdavi, Mehrdad and Lu, Chi-Jen and Jin, Rong and Zhu, Shenghuo},
  title     = {Online optimization with gradual variations},
  booktitle = {Conference on Learning Theory (COLT)},
  year      = {2012}
}

@inproceedings{rakhlin2013,
  author    = {Rakhlin, Alexander and Sridharan, Karthik},
  title     = {Online learning with predictable sequences},
  booktitle = {Conference on Learning Theory (COLT)},
  year      = {2013}
}

@inproceedings{zzlz2022,
  author    = {Zhang, Mengxiao and Zhao, Peng and Luo, Haipeng and Zhou, Zhi-Hua},
  title     = {No-Regret Learning in Time-Varying Zero-Sum Games},
  booktitle = {Proceedings of the 39th International Conference on Machine Learning},
  series    = {Proceedings of Machine Learning Research},
  volume    = {162},
  pages     = {26772--26808},
  publisher = {PMLR},
  year      = {2022}
}

@article{zhao2024,
  author  = {Zhao, Peng and Zhang, Yu-Jie and Zhang, Lijun and Zhou, Zhi-Hua},
  title   = {Adaptivity and Non-stationarity: Problem-dependent Dynamic Regret for Online               Convex Optimization},
  journal = {Journal of Machine Learning Research},
  volume  = {25},
  number  = {98},
  pages   = {1--52},
  year    = {2024}
}

@inproceedings{xu2026distributed,
  author    = {Zirui Xu and Vasileios Tzoumas},
  title     = {Distributed Online Submodular Maximization under Communication Delays: A Simultaneous Decision-Making Approach},
  booktitle = {American Control Conference (ACC)},
  pages     = {4105--4110},
  year      = {2026}
}

@article{xu2025communication,
  title={Communication- and Computation-Efficient Distributed Submodular Optimization in Robot Mesh Networks},
  author={Xu, Zirui and Garimella, Sandilya Sai and Tzoumas, Vasileios},
  journal={IEEE Transactions on Robotics (TRO)},
  year={2025}
}

@article{conforti1984submodular,
  title={Submodular set functions, matroids and the greedy algorithm: Tight worst-case bounds and some generalizations of the Rado-Edmonds theorem},
  author={Conforti, Michele and Cornu{\'e}jols, G{\'e}rard},
  journal={Discrete Applied Mathematics},
  volume={7},
  number={3},
  pages={251--274},
  year={1984},
  publisher={Elsevier}
}

@inproceedings{xu2024leveraging,
  title={Leveraging untrustworthy commands for multi-robot coordination in unpredictable environments: A bandit submodular maximization approach},
  author={Xu, Zirui and Lin, Xiaofeng and Tzoumas, Vasileios},
  booktitle={American Control Conference (ACC)},
  pages={192--199},
  year={2024}
}

@inproceedings{xu2023bandit,
  title={Bandit Submodular Maximization for Multi-Robot Coordination in Unpredictable and Partially Observable Environments},
  author={Xu, Zirui and Lin, Xiaofeng and Tzoumas, Vasileios},
  booktitle={Robotics: Science and Systems (RSS)},
  year={2023}
}

@article{xu2023online,
  title={Online Submodular Coordination With Bounded Tracking Regret: Theory, Algorithm, and Applications to Multi-Robot Coordination},
  author={Xu, Zirui and Zhou, Hongyu and Tzoumas, Vasileios},
  journal={IEEE Robotics and Automation Letters (RAL)},
  volume={8},
  number={4},
  pages={2261--2268},
  year={2023}
}

@article{herbster1998tracking,
  title={Tracking the best expert},
  author={Herbster, Mark and Warmuth, Manfred K},
  journal={Machine learning},
  volume={32},
  number={2},
  pages={151--178},
  year={1998},
  publisher={Springer}
}

@article{streeter2008online,
  title={An online algorithm for maximizing submodular functions},
  author={Streeter, Matthew and Golovin, Daniel},
  journal={Adv. Neu. Inf. Proc. Sys.},
  volume={21},
  year={2008}
}

@inproceedings{matsuoka2021tracking,
  title={Tracking Regret Bounds for Online Submodular Optimization},
  author={Matsuoka, Tatsuya and Ito, Shinji and Ohsaka, Naoto},
  booktitle={International Conference on Artificial Intelligence and Statistics (AISTATS)},
  pages={3421--3429},
  year={2021},
  organization={PMLR}
}

@article{gharesifard2017distributed,
  title={Distributed submodular maximization with limited information},
  author={Gharesifard, Bahman and Smith, Stephen L},
  journal={IEEE Transactions on Control of Network Systems (TCNS)},
  volume={5},
  number={4},
  pages={1635--1645},
  year={2017},
  publisher={IEEE}
}

@article{rezazadeh2023distributed,
  title={Distributed strategy selection: A submodular set function maximization approach},
  author={Rezazadeh, Navid and Kia, Solmaz S},
  journal={Automatica},
  volume={153},
  pages={111000},
  year={2023},
  publisher={Elsevier}
}

@inproceedings{robey2021optimal,
  title={Optimal algorithms for submodular maximization with distributed constraints},
  author={Robey, Alexander and Adibi, Arman and Schlotfeldt, Brent and Hassani, Hamed and Pappas, George J},
  booktitle={Learn. for Dyn. \& Cont. (L4DC)},
  pages={150--162},
  year={2021},
}

@article{schlotfeldt2021resilient,
  title={Resilient active information acquisition with teams of robots},
  author={Schlotfeldt, Brent and Tzoumas, Vasileios and Pappas, George J},
  journal={IEEE Transactions on Robotics (TRO)},
  volume={38},
  number={1},
  pages={244--261},
  year={2021},
  publisher={IEEE}
}

@article{liu2021distributed,
  title={Distributed resilient submodular action selection in adversarial environments},
  author={Liu, Jun and Zhou, Lifeng and Tokekar, Pratap and Williams, Ryan K},
  journal={IEEE Robotics and Automation Letters (RAL)},
  volume={6},
  number={3},
  pages={5832--5839},
  year={2021},
  publisher={IEEE}
}

@article{corah2019distributed,
  title={Distributed matroid-constrained submodular maximization for multi-robot exploration: Theory and practice},
  author={Corah, Micah and Michael, Nathan},
  journal={Autonomous Robots (AURO)},
  volume={43},
  number={2},
  pages={485--501},
  year={2019},
}

@inproceedings{corah2018distributed,
  title={Distributed submodular maximization on partition matroids for planning on large sensor networks},
  author={Corah, Micah and Michael, Nathan},
  booktitle={IEEE Conference on Decision and Control (CDC)},
  pages={6792--6799},
  year={2018},
}

@inproceedings{konda2022execution,
  title={Execution order matters in greedy algorithms with limited information},
  author={Konda, Rohit and Grimsman, David and Marden, Jason R},
  booktitle={American Control Conference (ACC)},
  pages={1305--1310},
  year={2022}
}

@inproceedings{chen2018online,
  title={Online Continuous Submodular Maximization},
  author={Chen, Lin and Hassani, Hamed and Karbasi, Amin},
  booktitle={International Conference on Artificial Intelligence and Statistics (AISTATS)},
  pages={1896--1905},
  year={2018},
  organization={PMLR}
}

@article{sviridenko2017optimal,
  title={Optimal approximation for submodular and supermodular optimization with bounded curvature},
  author={Sviridenko, Maxim and Vondr{\'a}k, Jan and Ward, Justin},
  journal={Math.~of Operations Research},
  volume={42},
  number={4},
  pages={1197--1218},
  year={2017},
}

@inproceedings{tokekar2014multi,
  title={Multi-target visual tracking with aerial robots},
  author={Tokekar, Pratap and Isler, Volkan and Franchi, Antonio},
  booktitle={IEEE/RSJ International Conference on Intelligent Robots and Systems (IROS)},
  pages={3067--3072},
  year={2014},
}

@incollection{fisher1978analysis,
  title={An analysis of approximations for maximizing submodular set functions--{II}},
  author={Fisher, Marshall L and Nemhauser, George L and Wolsey, Laurence A},
  booktitle={Polyhedral combinatorics},
  pages={73--87},
  year={1978},
}

@article{singh2009efficient,
  title={Efficient informative sensing using multiple robots},
  author={Singh, Amarjeet and Krause, Andreas and Guestrin, Carlos and Kaiser, William J},
  journal={Journal of Artificial Intelligence Research (JAIR)},
  volume={34},
  pages={707--755},
  year={2009}
}

@incollection{crama1989characterization,
  title={A characterization of a cone of pseudo-Boolean functions via supermodularity-type inequalities},
  author={Crama, Yves and Hammer, Peter L and Holzman, Ron},
  booktitle={Quantitative Methoden in den Wirtschaftswissenschaften},
  pages={53--55},
  year={1989},
  publisher={Springer}
}

@article{foldes2005submodularity,
  title={Submodularity, supermodularity, and higher-order monotonicities of pseudo-Boolean functions},
  author={Foldes, Stephan and Hammer, Peter L},
  journal={Mathematics of Operations Research},
  volume={30},
  number={2},
  pages={453--461},
  year={2005},
  publisher={INFORMS}
}

@inproceedings{atanasov2015decentralized,
  title={Decentralized active information acquisition: {T}heory and application to multi-robot {SLAM}},
  author={Atanasov, Nikolay and Le Ny, Jerome and Daniilidis, Kostas and Pappas, George J},
  booktitle={IEEE Inter. Conf. Rob. Auto. (ICRA)},
  pages={4775--4782},
  year={2015},
}

@book{cesa2006prediction,
  title={Prediction, Learning, and Games},
  author={Cesa-Bianchi, Nicolo and Lugosi, G{\'a}bor},
  year={2006},
  publisher={Cambridge University Press}
}

@inproceedings{corah2021scalable,
  title={Scalable Distributed Planning for Multi-Robot, Multi-Target Tracking},
  author={Corah, Micah and Michael, Nathan},
  booktitle={IEEE/RSJ International Conference on Intelligent Robots and Systems (IROS)},
  pages={437--444},
  year={2021}
}

@article{kalman1960new,
  title={A new approach to linear filtering and prediction problems},
  author={Kalman, Rudolph Emil},
  journal={Journal of Fluids Engineering},
  volume={82},
  number={1},
  pages={35--45},
  year={1960},
  publisher={American Society of Mechanical Engineers}
}

@article{krause2008near,
  title={Near-optimal sensor placements in Gaussian processes: Theory, efficient algorithms and empirical studies},
  author={Krause, Andreas and Singh, Ajit and Guestrin, Carlos},
  journal={Jour. Mach. Learn. Res. (JMLR)},
  volume={9},
  pages={235--284},
  year={2008},
}

@article{krause2012submodular,
  title={Submodular function maximization},
  author={Krause, Andreas and Golovin, Daniel},
  journal={Tractability: Practical Approaches to Hard Problems},
  volume={3},
  year={2012},
}

@article{Feige:1998:TLN:285055.285059,
 author = {Feige, Uriel},
 title = {A Threshold of $ln(n)$ for Approximating Set Cover},
 journal = {Journal of the ACM (JACM)},
 volume = {45},
 number = {4},
 year = {1998},
 pages = {634--652},
}

@article{grimsman2019impact,
  title={The impact of information in distributed submodular maximization},
  author={Grimsman, David and Ali, Mohd Shabbir and Hespanha, Joao P and Marden, Jason R},
  journal={IEEE Trans. Ctrl. Netw. Sys. (TCNS)},
  volume={6},
  number={4},
  pages={1334--1343},
  year={2019},
  publisher={IEEE}
}

\appendices

\section{Proof of \Cref{thm:ofsf}}
\label{app:ofsf}
 
We prove the precise form
\begin{align}
  \sum_{t=1}^{T} r_{i^\star_t,t} - \sum_{t=1}^{T} r_t^{\top} p_t & \leq \tfrac{3}{\sqrt{2}}\sqrt{\alpha_{\mathcal{V}}\,(1+2P(T))\,\bar D(T)} \notag \\&\quad+ 10\sqrt{\bar D(T)\ln J},
  \label{eq:thmofsf-precise}
\end{align}
with $\alpha_{\mathcal{V}} = 3+\ln(|\mathcal{V}|T)$ and
$J = \lceil\log_2(2T)\rceil+1$. \Cref{thm:ofsf} follows, since $\alpha_{\mathcal{V}} = O(\log(|\mathcal{V}|T))$, $\ln J = O(\log\log T)$, and $1+2P(T) \le 2(1+P(T))$.
 
Write $\lambda_{j,t} \triangleq r_t^{\top}p^{(j)}_t$ and $\mu_{j,t} \triangleq \hat r_t^{\top}p^{(j)}_t$ for the realized and predicted rewards of base expert $j$, and $B(T) \triangleq \sum_{t=1}^{T}\|\lambda_t-\mu_t\|_\infty^2$. The imported lemmas are stated for losses. We apply them to $\ell_t=-r_t$ and $m_t=-\hat r_t$, which leaves every norm unchanged and reverses the sign of the regret. After normalization, lines~8 and 14--16 of \Cref{alg:ofsf} are the base updates of \cite[eq.~(20)]{zzlz2022} with share coefficient $\xi=\beta$, and lines 9 and 17 are Optimistic Hedge \cite{zhao2024}.
 
\begin{lemma}[\hspace{-.1mm}{\cite[Lemma~13]{zzlz2022}}]\label{lem:base}
For any $j$ and any comparators $u_t\in\Delta_{|\mathcal{V}|}$ with $P^u \triangleq \sum_{t\ge2}\|u_t-u_{t-1}\|_1$, $\sum_{t=1}^{T}\langle u_t-p^{(j)}_t,\,r_t\rangle \le \alpha_{\mathcal{V}}(1+P^u)/\gamma^{(j)}+\gamma^{(j)}D(T)$.
\end{lemma}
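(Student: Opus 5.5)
The statement is Lemma~\ref{lem:base}, imported from \cite[Lemma~13]{zzlz2022}. I will verify that base learner $j$ of \Cref{alg:ofsf} fits that template after the sign flip $\ell_t=-r_t$, $m_t=-\hat r_t$, and then rerun the standard optimistic fixed-share argument.

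\textbf{Step 1: rewrite the base learner as optimistic mirror descent.} Write the base learner in the two-step form with entropy regularizer and step size $\gamma\triangleq\gamma^{(j)}$. The auxiliary iterate is $\hat w_t\propto w^{(j)}_t$, which is the fixed-share mixture carried from round $t-1$. The played iterate is $p^{(j)}_t\propto \hat w_t\odot e^{\gamma\hat r_t}$, matching line~8. The update $\hat w_{t+1}$ comes from $\hat w_t\odot e^{\gamma r_t}$ followed by mixing with the uniform distribution at rate $\beta$, matching lines~14--16. The key feature is that $\hat r_t$ never enters $\hat w_{t+1}$, so optimism costs only a per-round term.

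\textbf{Step 2: per-round decomposition.} For a comparator $u_t$, split $\langle u_t-p^{(j)}_t,r_t\rangle$ into three pieces:
\[
\langle u_t-\tilde w_{t+1},r_t\rangle+\langle \tilde w_{t+1}-p^{(j)}_t,r_t-\hat r_t\rangle+\langle \tilde w_{t+1}-p^{(j)}_t,\hat r_t\rangle,
\]
where $\tilde w_{t+1}\propto\hat w_t\odot e^{\gamma r_t}$ is the pre-mixing update.

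The first and third pieces are handled by the three-point (Bregman) inequality for the KL divergence. They give at most $\tfrac1\gamma[\mathrm{KL}(u_t\|\hat w_t)-\mathrm{KL}(u_t\|\tilde w_{t+1})]$ minus the stability terms $\tfrac1\gamma\mathrm{KL}(\tilde w_{t+1}\|p^{(j)}_t)+\tfrac1\gamma\mathrm{KL}(p^{(j)}_t\|\hat w_t)$.

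The middle piece is at most $\|\tilde w_{t+1}-p^{(j)}_t\|_1\|r_t-\hat r_t\|_\infty$. By Pinsker and AM--GM this is at most $\tfrac{1}{\gamma}\mathrm{KL}(\tilde w_{t+1}\|p^{(j)}_t)+\gamma\|r_t-\hat r_t\|_\infty^2$ (up to the constant in Pinsker, which the lemma absorbs). Summing the last term over $t$ gives the $\gamma D(T)$ term.

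\textbf{Step 3: handle the moving comparator and the fixed share.} Telescoping requires comparing $\mathrm{KL}(u_{t+1}\|\hat w_{t+1})$ with $\mathrm{KL}(u_t\|\tilde w_{t+1})$. This is the step I expect to be the main obstacle. I would use the standard fixed-share facts, sketched below.
\begin{itemize}
\item Mixing with rate $\beta=1/T$ guarantees $\hat w_{t+1}\ge \beta/|\mathcal V|$ coordinatewise. Hence $\ln(1/\hat w_{t+1,i})\le\ln(|\mathcal V|T)$.
\item Mixing also satisfies $\hat w_{t+1}\ge(1-\beta)\tilde w_{t+1}$. Hence $\mathrm{KL}(u_t\|\hat w_{t+1})\le \mathrm{KL}(u_t\|\tilde w_{t+1})+\ln\tfrac1{1-\beta}$.
\item Switching the comparator costs $\mathrm{KL}(u_{t+1}\|\hat w_{t+1})-\mathrm{KL}(u_t\|\hat w_{t+1})\le \|u_{t+1}-u_t\|_1\ln(|\mathcal V|T)$ plus an entropy difference. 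The entropy difference telescopes, and for vertex comparators it vanishes.
\item The initial term is $\mathrm{KL}(u_1\|\text{uniform})\le\ln|\mathcal V|$.
\end{itemize}
Summing over $t$, the total $\ln\tfrac1{1-\beta}\cdot T\le 2$ for $T\ge2$. This is where the constant $3$ in $\alpha_{\mathcal V}=3+\ln(|\mathcal V|T)$ comes from. Collecting terms gives $\tfrac1\gamma\,\alpha_{\mathcal V}(1+P^u)$.

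\textbf{Step 4: finish.} Dropping the remaining nonpositive stability terms yields the claimed bound $\alpha_{\mathcal V}(1+P^u)/\gamma^{(j)}+\gamma^{(j)}D(T)$. Since the argument is deterministic, the bound holds for every realization.
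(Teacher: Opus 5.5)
Your proof is correct, but it takes a different route from the paper, which does not reprove this bound. The paper observes that, after normalizing $w^{(j)}_t$ and passing to losses $\ell_t=-r_t$, $m_t=-\hat r_t$, lines~8 and 14--16 of \Cref{alg:ofsf} are the base update of \cite{zzlz2022} (their eq.~(20)) with share coefficient $\xi=\beta$. It then invokes their Lemma~13 with its nonpositive stability term discarded.

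Your Step~1 is exactly that identification. Steps~2--4 reconstruct the imported lemma from the three-point identity for entropic mirror descent plus the standard fixed-share argument: the floor $\beta/|\mathcal{V}|$ prices comparator switches, and the factor $1-\beta$ prices the mixing.

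The citation is shorter, but it asks the reader to trust that the cited constants specialize to $\alpha_{\mathcal{V}}=3+\ln(|\mathcal{V}|T)$ at $\xi=1/T$. Your derivation is self-contained and shows where each term comes from. It also makes visible that no boundedness of $r_t$ or $\hat r_t$ is used, only $D(T)$. It is even slightly sharper: Pinsker together with $xy\le \frac{x^2}{2\gamma}+\frac{\gamma y^2}{2}$ consumes exactly the available $\frac{1}{\gamma}\mathrm{KL}(\tilde w_{t+1}\|p^{(j)}_t)$ and leaves $\frac{\gamma}{2}\|r_t-\hat r_t\|_\infty^2$, so there is no Pinsker constant to absorb. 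Note that your $\tilde w_{t+1}$ is the normalized $v^{(j)}_t$, not the paper's $\tilde w^{(j)}_{t+1}$ of line~8; rename it to avoid a clash.

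When you ``collect terms'', make two points explicit (with $H$ the Shannon entropy).

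\emph{Entropy terms.} For non-vertex comparators the telescoped entropy difference $H(u_1)-H(u_T)$ does not vanish. Merge it with the initial term $\mathrm{KL}(u_1\|\mathrm{unif})=\ln|\mathcal{V}|-H(u_1)$ to get $\ln|\mathcal{V}|-H(u_T)\le\ln|\mathcal{V}|$. Bounding the two separately by $\ln|\mathcal{V}|$ each leaves $2\ln|\mathcal{V}|$. That is not covered by $\alpha_{\mathcal{V}}(1+P^u)$ when $|\mathcal{V}|\gg T$ and $P^u$ is small but positive.

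\emph{Mixing cost.} The mixing comparison is needed only for $t\le T-1$, since the final $-\mathrm{KL}(u_T\|\tilde w_{T+1})$ is simply dropped. So the mixing cost is $(T-1)\ln\frac{T}{T-1}\le 1$. This also handles $T=1$, where $\beta=1$ would make $\ln\frac{1}{1-\beta}$ infinite.

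Since $\sum_i(u_{t+1,i}-u_{t,i})(-\ln\hat w_{t+1,i})\le\frac12\|u_{t+1}-u_t\|_1\ln(|\mathcal{V}|T)$, the total is at most $\frac{1}{\gamma}\big[\ln|\mathcal{V}|+1+\tfrac12\ln(|\mathcal{V}|T)\,P^u\big]+\frac{\gamma}{2}D(T)$. This lies within the claimed bound.
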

 
\begin{lemma}[\hspace{-.1mm}{\cite[Lemma~2]{zhao2024}}]\label{lem:meta}
For Optimistic Hedge over $J$ experts at a fixed rate $\varepsilon>0$ and any expert $j$, $\sum_{t}\lambda_{j,t}-\sum_{t}\langle q_t,\lambda_t\rangle \le \varepsilon\sum_t\|\lambda_t-\mu_t\|_\infty^2+\ln(J)/\varepsilon$.
\end{lemma}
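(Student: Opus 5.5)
We plan to prove \Cref{lem:meta} by viewing Optimistic Hedge as optimistic online mirror descent on the simplex $\Delta_J$ with the negative-entropy regularizer and step size $\varepsilon$. Throughout, $\varepsilon$ is held fixed, which matches the use inside \Cref{alg:ofsf} between two consecutive restarts at line~20. We introduce the \emph{secondary} (non-optimistic) iterate $x_t \triangleq z_t/\|z_t\|_1$, where $z_{j,t}=\exp(\varepsilon\sum_{s<t}\lambda_{j,s})$ by line~17. The played distribution $q_t\propto x_t\odot\exp(\varepsilon\mu_t)$ (line~9) is then exactly the two mirror steps
\[
q_t=\arg\max_{q\in\Delta_J}\{\varepsilon\langle q,\mu_t\rangle-\mathrm{KL}(q\|x_t)\},\qquad
x_{t+1}=\arg\max_{x\in\Delta_J}\{\varepsilon\langle x,\lambda_t\rangle-\mathrm{KL}(x\|x_t)\}.
\]
The first step is to verify these two closed forms; this is immediate from the Gibbs variational principle.

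Next, fix an expert $j$ and write $e_j$ for the $j$-th vertex of $\Delta_J$. We apply the three-point (Bregman) inequality to each of the two steps. From the $x_{t+1}$-step we obtain
\[
\varepsilon\langle e_j-x_{t+1},\lambda_t\rangle\le \mathrm{KL}(e_j\|x_t)-\mathrm{KL}(e_j\|x_{t+1})-\mathrm{KL}(x_{t+1}\|x_t).
\]
From the $q_t$-step, taking comparator $x_{t+1}$, we obtain
\[
\varepsilon\langle x_{t+1}-q_t,\mu_t\rangle\le \mathrm{KL}(x_{t+1}\|x_t)-\mathrm{KL}(x_{t+1}\|q_t)-\mathrm{KL}(q_t\|x_t).
\]
We then use the decomposition
\[
\langle e_j-q_t,\lambda_t\rangle=\langle e_j-x_{t+1},\lambda_t\rangle+\langle x_{t+1}-q_t,\mu_t\rangle+\langle x_{t+1}-q_t,\lambda_t-\mu_t\rangle.
\]
Substituting the two inequalities, the $\mathrm{KL}(x_{t+1}\|x_t)$ terms cancel, and we are left with the per-round bound
\[
\langle e_j-q_t,\lambda_t\rangle\le\tfrac1\varepsilon\big[\mathrm{KL}(e_j\|x_t)-\mathrm{KL}(e_j\|x_{t+1})\big]+\langle x_{t+1}-q_t,\lambda_t-\mu_t\rangle-\tfrac1\varepsilon\mathrm{KL}(x_{t+1}\|q_t).
\]

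The key step, and the only one needing care, is to absorb the stability term $\langle x_{t+1}-q_t,\lambda_t-\mu_t\rangle$ using the negative KL. We apply Hölder's inequality and then Young's inequality:
\[
\langle x_{t+1}-q_t,\lambda_t-\mu_t\rangle\le\|x_{t+1}-q_t\|_1\|\lambda_t-\mu_t\|_\infty\le\tfrac{\varepsilon}{2}\|\lambda_t-\mu_t\|_\infty^2+\tfrac{1}{2\varepsilon}\|x_{t+1}-q_t\|_1^2.
\]
By Pinsker's inequality, $\tfrac12\|x_{t+1}-q_t\|_1^2\le\mathrm{KL}(x_{t+1}\|q_t)$, so the last term is cancelled by $-\tfrac1\varepsilon\mathrm{KL}(x_{t+1}\|q_t)$. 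The anticipated obstacle is exactly this matching of constants, and it is resolved by comparing the optimistic iterate $q_t$ with the next secondary iterate $x_{t+1}$, rather than with $x_t$.

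Finally, we sum over $t$. The KL terms telescope to at most $\mathrm{KL}(e_j\|x_1)=\ln J$, since $x_1$ is uniform: $z_1=\mathbf 1$, both initially and after each restart. This gives
\[
\sum_t\lambda_{j,t}-\sum_t\langle q_t,\lambda_t\rangle\le\tfrac{\varepsilon}{2}\sum_t\|\lambda_t-\mu_t\|_\infty^2+\tfrac{\ln J}{\varepsilon},
\]
which is stronger than the stated bound by a factor of $2$ in the first term.
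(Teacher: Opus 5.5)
Your proof is correct, and it gives slightly more than the statement. The two entropic mirror steps are exact, so your three-point inequalities actually hold with equality. The $\mathrm{KL}(x_{t+1}\|x_t)$ terms cancel, and Young's inequality with weights $\varepsilon/2$ and $1/(2\varepsilon)$ is exactly absorbed by Pinsker applied to $\mathrm{KL}(x_{t+1}\|q_t)$. This yields $\tfrac{\varepsilon}{2}\sum_t\|\lambda_t-\mu_t\|_\infty^2+\ln(J)/\varepsilon$, which implies the lemma a fortiori.

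The paper does not prove this lemma. It imports it from \cite[Lemma~2]{zhao2024}, which is stated for losses and carries an extra non-positive stability term of the form $-\tfrac{c}{\varepsilon}\sum_t\|q_t-q_{t-1}\|_1^2$. The paper then flips signs ($\ell_t=-r_t$, $m_t=-\hat r_t$) and discards that term. The cited result rests on the same optimistic-mirror-descent three-point argument you use. Its coefficient $\varepsilon$, instead of your $\varepsilon/2$, is the price of reserving part of the KL terms to generate the negative term.

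Since the paper throws that term away, your version loses nothing and gains several things:
\begin{itemize}
\item a self-contained argument stated directly in reward form, with no sign conversion;
\item an explicit check that lines~9 and~17 of Algorithm~\ref{alg:ofsf} are exactly Optimistic Hedge with secondary iterate $x_t=z_t/\|z_t\|_1$, which is uniform at the start of each epoch because $z$ is reset to the all-ones vector;
\item a factor-$2$ gain that would modestly tighten the constant in Lemma~\ref{lem:restart}.
\end{itemize}
One notational point: within a fixed-rate run, $x_{t+1}$ must always denote the mirror-step output, never the reset vector at an epoch boundary. Since the lemma concerns a single fixed-rate run, this holds automatically.
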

 
Both are stated with an additional non-positive term, dropped here. The remaining lemmas are ours.
 
\begin{lemma}\label{lem:budget}
$\|\lambda_t-\mu_t\|_\infty\le\|r_t-\hat r_t\|_\infty\le1$, hence $B(T)\le D(T)$.
\end{lemma}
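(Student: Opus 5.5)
The statement to prove is \Cref{lem:budget}: $\|\lambda_t-\mu_t\|_\infty\le\|r_t-\hat r_t\|_\infty\le 1$, and hence $B(T)\le D(T)$.

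The proof is a direct computation, coordinate by coordinate in the base-learner index $j$.

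For each fixed $j$, write the $j$-th entry of $\lambda_t-\mu_t$ as
\[
\lambda_{j,t}-\mu_{j,t}=(r_t-\hat r_t)^{\top}p^{(j)}_t .
\]
Line~8 of \Cref{alg:ofsf} shows that $p^{(j)}_t$ is a probability vector: $\tilde w^{(j)}_t$ has strictly positive entries and is normalized by its $\ell_1$ norm. Hölder's inequality with the $(\ell_\infty,\ell_1)$ pair then gives
\[
|\lambda_{j,t}-\mu_{j,t}|\le\|r_t-\hat r_t\|_\infty\,\|p^{(j)}_t\|_1=\|r_t-\hat r_t\|_\infty .
\]
Taking the maximum over $j=1,\dots,J$ yields the first inequality.

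For the second inequality, every entry of $r_t$ and of $\hat r_t$ lies in $[0,1]$. So each entry of $r_t-\hat r_t$ lies in $[-1,1]$, and $\|r_t-\hat r_t\|_\infty\le 1$.

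Finally, both sides of the first inequality are nonnegative, so squaring preserves it. Summing over $t=1,\dots,T$ and comparing with the definitions of $B(T)$ and $D(T)$ in \cref{eq:DT} gives $B(T)\le D(T)$.

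The only point needing care is confirming that $p^{(j)}_t$ is a genuine point of the simplex, so that $\|p^{(j)}_t\|_1=1$. This holds because the weights $w^{(j)}_{i,t}$ are initialized to $1$ and stay positive under the updates in lines~14--16, with $\beta\in(0,1]$ and exponentials positive. The lemma has no real obstacle.
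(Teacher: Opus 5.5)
Your proof is correct and matches the paper's argument: you bound each coordinate $|(r_t-\hat r_t)^{\top}p^{(j)}_t|$ by H\"older using $\|p^{(j)}_t\|_1=1$, use $r_t,\hat r_t\in[0,1]^{|\mathcal{V}|}$ to get the bound by $1$, and then square and sum to obtain $B(T)\le D(T)$. Your extra check that the weights stay positive, so that $p^{(j)}_t$ really lies in the simplex, is a detail the paper takes for granted.
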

\begin{proof}
$|\lambda_{j,t}-\mu_{j,t}|=|(r_t-\hat r_t)^{\top}p^{(j)}_t| \le\|r_t-\hat r_t\|_\infty$, since $\|p^{(j)}_t\|_1=1$ and $r_t,\hat r_t\in[0,1]^{|\mathcal{V}|}$.
\end{proof}
 
\begin{lemma}\label{lem:grid}
There exists $j$ such that $\alpha_{\mathcal{V}}(1+2P(T))/\gamma^{(j)}+\gamma^{(j)}\bar D(T) \le \tfrac{3}{\sqrt2}\sqrt{\alpha_{\mathcal{V}}(1+2P(T))\,\bar D(T)}$.
\end{lemma}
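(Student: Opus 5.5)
The argument is a standard geometric-grid argument for the learning rates of line~2 of \Cref{alg:ofsf}. Write $A \triangleq \alpha_{\mathcal{V}}(1+2P(T))$, so the quantity to control is $g(\gamma) \triangleq A/\gamma + \gamma \bar D(T)$. Its unconstrained minimizer is $\gamma^\star = \sqrt{A/\bar D(T)}$, where $g(\gamma^\star)=2\sqrt{A\bar D(T)}$. If $\gamma = c\,\gamma^\star$ for some $c>0$, then
\begin{equation*}
g(\gamma)=\big(c+1/c\big)\sqrt{A\bar D(T)}.
\end{equation*}
The plan is therefore to show two things: $\gamma^\star$ lies in the range spanned by the grid $\{\gamma^{(j)}\}_{j=1}^J$, and some grid point is within a multiplicative factor $\sqrt2$ of $\gamma^\star$. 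Then $c\in[1/\sqrt2,\sqrt2]$, and since $c+1/c$ is maximized on this interval at its endpoints, $c+1/c\le \sqrt2+1/\sqrt2 = 3/\sqrt2$, which is the claimed bound.

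\textbf{Step 1: lower end of the range.} Use $\bar D(T)=\max\{D(T),1\}\le T$ (since $D(T)\le T$) and $A\ge\alpha_{\mathcal{V}}$ (since $P(T)\ge0$). These give $\gamma^\star\ge\sqrt{\alpha_{\mathcal{V}}/T}=\gamma^{(1)}$.

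\textbf{Step 2: upper end of the range.} Use $\bar D(T)\ge1$ and $P(T)\le T-1$. These give $\gamma^\star\le\sqrt{\alpha_{\mathcal{V}}(2T-1)}\le 2\sqrt{\alpha_{\mathcal{V}}T}$. On the grid side, $J-1=\lceil\log_2(2T)\rceil\ge\log_2(2T)$, so $\gamma^{(J)}=\sqrt{\alpha_{\mathcal{V}}/T}\,2^{J-1}\ge \sqrt{\alpha_{\mathcal{V}}/T}\cdot 2T = 2\sqrt{\alpha_{\mathcal{V}}T}\ge\gamma^\star$. Hence $\gamma^{(1)}\le\gamma^\star\le\gamma^{(J)}$.

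\textbf{Step 3: choosing $j$.} Consecutive grid points differ by a factor of exactly $2$. So $\log_2\gamma^{(j)}$ forms an arithmetic progression with unit step covering $\log_2\gamma^\star$, and the grid point nearest to $\gamma^\star$ in log scale satisfies $|\log_2(\gamma^{(j)}/\gamma^\star)|\le 1/2$. That is, $c=\gamma^{(j)}/\gamma^\star\in[2^{-1/2},2^{1/2}]$, and Step~3 combined with the identity above concludes.

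The only delicate point is Step~2: checking that the top of the grid really reaches $\gamma^\star$ in the worst case $P(T)=T-1$, $\bar D(T)=1$. This is exactly where the choices $J=\lceil\log_2(2T)\rceil+1$ and the bound $1+2P(T)\le 2T-1\le 4T$ are used. The rest is elementary.
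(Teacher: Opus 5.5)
Your proof is correct and follows the paper's argument essentially step for step. Both use the same $\gamma^\star=\sqrt{A/\bar D(T)}$, the same check that $\gamma^\star$ lies between $\gamma^{(1)}$ and $\gamma^{(J)}$ (the paper phrases this as $\gamma^\star/\gamma^{(1)}\in[1,2T]$, and uses the looser $P(T)\le T$ where you use $P(T)\le T-1$), and the same rounding to a grid point within a factor $\sqrt2$ of $\gamma^\star$, which gives $c+1/c\le 3/\sqrt2$.
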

\begin{proof}
Let $A=\alpha_{\mathcal{V}}(1+2P(T))$ and $\gamma^\star=\sqrt{A/\bar D(T)}$. Since $A\in[\alpha_{\mathcal{V}},\alpha_{\mathcal{V}}(1+2T)]$ and $\bar D(T)\in[1,T]$, $\gamma^\star/\gamma^{(1)}\in[1,\sqrt{T(1+2T)}]\subseteq[1,2T]$. As $\gamma^{(j)}=\gamma^{(1)}2^{j-1}$ for $j\le J=\lceil\log_2(2T)\rceil+1$, some $j$ has $\gamma^{(j)}=\lambda\gamma^\star$ with $\lambda\in[2^{-1/2},2^{1/2}]$, giving $A/\gamma^{(j)}+\gamma^{(j)}\bar D(T)=(\lambda+\lambda^{-1})\sqrt{A\bar D(T)} \le(2^{1/2}+2^{-1/2})\sqrt{A\bar D(T)}=\tfrac{3}{\sqrt2}\sqrt{A\bar D(T)}$.
\end{proof}
 
\begin{lemma}\label{lem:restart}
For every $j$, $\sum_t\lambda_{j,t}-\sum_t\langle q_t,\lambda_t\rangle \le 10\sqrt{\bar D(T)\ln J}$.
\end{lemma}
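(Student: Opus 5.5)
The proof is a standard doubling-trick analysis of Optimistic Hedge at the meta level. I split the horizon into epochs indexed by $\ell=0,1,\dots,L$. Epoch $\ell$ consists of the rounds during which the rate is fixed at $\varepsilon_\ell=\sqrt{\ln J/4^\ell}$ and the meta weights $z$ were reset to all ones at its start (for $\ell=0$ the rate is $\varepsilon_0=\sqrt{\ln J}$). Within each epoch the meta learner is exactly Optimistic Hedge over $J$ experts at a fixed rate, started from uniform weights. Lemma~\ref{lem:meta} therefore applies to epoch $\ell$ in isolation:
\[
\textstyle\sum_{t\in\text{ep}_\ell}\lambda_{j,t}-\langle q_t,\lambda_t\rangle \le \varepsilon_\ell B_\ell+\ln J/\varepsilon_\ell ,
\]
where $B_\ell$ is the sum of $\|\lambda_t-\mu_t\|_\infty^2$ over epoch $\ell$. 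I then sum this over epochs.

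The key step is bounding $B_\ell$. The restart condition at line 19 fires only once the cumulative $B_t$ exceeds $4^\ell$. Hence, before the last round of epoch $\ell$, the cumulative budget is at most $4^\ell$. Since each increment is at most $1$ by Lemma~\ref{lem:budget}, the epoch's own contribution satisfies $B_\ell\le 4^\ell+1\le 2\cdot4^\ell$. Plugging in $\varepsilon_\ell$, the per-epoch regret is at most
\[
\sqrt{\ln J}\,\bigl(2\cdot 2^\ell+2^\ell\bigr)=3\cdot 2^\ell\sqrt{\ln J}.
\]
Summing the geometric series over $\ell\le L$ gives at most $6\cdot 2^L\sqrt{\ln J}$.

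Next I relate $L$ to $D(T)$. If $L\ge1$, the cumulative budget exceeded $4^{L-1}$ at the restart that started epoch $L$. Together with $B(T)\le D(T)\le\bar D(T)$ from Lemma~\ref{lem:budget}, this gives $2^{L-1}<\sqrt{\bar D(T)}$, i.e., $2^L<2\sqrt{\bar D(T)}$. If $L=0$, then trivially $2^L=1\le\sqrt{\bar D(T)}$, because $\bar D(T)\ge1$; this is exactly why $\bar D$ is used instead of $D$. In either case the total is at most $12\sqrt{\bar D(T)\ln J}$.

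The main obstacle is tightening the constant to $10$. Two sources of slack can be removed:
\begin{itemize}
\item In the last epoch there is no restart, so $B_L\le 4^L$ without the $+1$, and its term is $2\cdot 2^L\sqrt{\ln J}$ rather than $3\cdot 2^L\sqrt{\ln J}$.
\item In earlier epochs, sharper accounting of the overshoot gives $B_\ell\le 4^\ell+1$ with the $+1$ absorbed more carefully.
\end{itemize}
If this accounting still falls short, I would instead bound $\sum_\ell \varepsilon_\ell B_\ell\le\sum_\ell 2^\ell\sqrt{\ln J}\,(1+4^{-\ell})$ and handle $\ell=0$ separately. The $\ell=0$ epoch costs at most $\sqrt{\ln J}\,(B_0+1)\le 2\sqrt{\ln J}$.

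One subtlety needs checking in the bookkeeping. The cumulative $B_t$ is never reset while the epoch counter is: $B_t$ accumulates from round $1$, not from the start of the current epoch. The bound $B_\ell\le 4^\ell+1$ uses only the cumulative quantity, so the argument goes through regardless.
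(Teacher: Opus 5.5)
Your route is the same as the paper's: apply Lemma~\ref{lem:meta} epoch by epoch, bound each epoch's budget by $4^\ell+1$ through the cumulative counter, and use $2^L\le 2\sqrt{\bar D(T)}$. The argument you actually carry out, however, proves only $12\sqrt{\bar D(T)\ln J}$. The step to the stated constant $10$ is left as a list of options rather than done, so as written the lemma is not established. The constant is lost at the relaxation $4^\ell+1\le 2\cdot 4^\ell$. That relaxation replaces the term $\varepsilon_\ell\cdot 1=2^{-\ell}\sqrt{\ln J}$, whose sum over epochs is below $2\sqrt{\ln J}$, by $2^\ell\sqrt{\ln J}$, whose sum is nearly $2\cdot 2^L\sqrt{\ln J}$.

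Your fallback is the right fix, and it is what the paper does. Keep $\varepsilon_\ell(4^\ell+1)+\ln J/\varepsilon_\ell=(2\cdot 2^\ell+2^{-\ell})\sqrt{\ln J}$ and sum. This gives $\sum_{\ell\le L}(2\cdot 2^\ell+2^{-\ell})\le 4\cdot 2^L+2\le 8\sqrt{\bar D(T)}+2\le 10\sqrt{\bar D(T)}$, where the last step uses $\bar D(T)\ge 1$.

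Your first bullet also closes the gap once stated precisely. Take $L$ to be the final counter value; then $b_L\le 4^L$ always. Either epoch $L$ contains round $T$ and its restart test never fired, so $B_T\le 4^L$, or the last restart fired at round $T$ and epoch $L$ is empty. The total is then at most $\bigl(3(2^L-1)+2\cdot 2^L\bigr)\sqrt{\ln J}\le 10\sqrt{\bar D(T)\ln J}$.

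Two smaller points. The second bullet (``absorbed more carefully'') has no content. Your separate estimate for $\ell=0$ is wrong: $b_0$ can reach $2$, so that epoch can cost $3\sqrt{\ln J}$, not $2\sqrt{\ln J}$. This is harmless under the $(2\cdot 2^\ell+2^{-\ell})$ accounting, which already charges $3\sqrt{\ln J}$ to $\ell=0$.
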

\begin{proof}
We use a variant of the doubling trick~\cite{cesa2006prediction}, restarting on the observed prediction-error budget rather than on fixed rounds, as suggested in~\cite{zhao2024}.

Epoch $\ell$ restarts the meta at the fixed rate $\varepsilon_\ell=\sqrt{\ln(J)/4^\ell}$, so \Cref{lem:meta} applies to it separately. Its budget $b_\ell$ satisfies $b_\ell\le4^\ell+1$: $B_t\le4^\ell$ before its last round, which adds at most $1$ by \Cref{lem:budget}. The epoch's regret is thus at most $\varepsilon_\ell(4^\ell+1)+\ln(J)/\varepsilon_\ell =(2\cdot2^\ell+2^{-\ell})\sqrt{\ln J}$. If $L$ is the last epoch, $4^{L-1}<\bar B(T)$ for $L\ge1$, so $2^L\le2\sqrt{\bar B(T)}$ in all cases. Summing, $\sum_{\ell\le L}(2\cdot2^\ell+2^{-\ell})\le 4\cdot2^L+2 \le8\sqrt{\bar B(T)}+2\le10\sqrt{\bar B(T)}$, and $\bar B(T)\le\bar D(T)$ by \Cref{lem:budget}.
\end{proof}
 
\begin{proof}[Proof of \Cref{thm:ofsf}]
Since $p_t=\sum_jq_{j,t}p^{(j)}_t$, we have $r_t^{\top}p_t=\langle q_t,\lambda_t\rangle$. Hence, for any $j$,
\begin{align*}
\sum_t r_{i^\star_t,t}-\sum_t r_t^{\top}p_t &=\Big[\sum_t\lambda_{j,t}-\sum_t\langle q_t,\lambda_t\rangle\Big] \\&\quad +\Big[\sum_t r_{i^\star_t,t}-\sum_t r_t^{\top}p^{(j)}_t\Big].
\end{align*}
The first bracket is at most $10\sqrt{\bar D(T)\ln J}$ by \Cref{lem:restart}. For the second, take $u_t=e_{i^\star_t}$ in \Cref{lem:base}. Since $\|e_i-e_{i'}\|_1=2$ for $i\ne i'$, $P^u=2P(T)$, and with $D(T)\le\bar D(T)$ the bracket is at most $A/\gamma^{(j)}+\gamma^{(j)}\bar D(T)$. Choosing $j$ as in \Cref{lem:grid} gives $\tfrac{3}{\sqrt2}\sqrt{\alpha_{\mathcal{V}}(1+2P(T))\,\bar D(T)}$. Every step is an inequality between realized quantities, so the bound holds on every realization.
\end{proof}

\end{document}